\documentclass{theoretics}
\usepackage[T1]{fontenc}
\usepackage{amsfonts,amsthm,amsmath,amssymb}
\usepackage{
framed}  

\usepackage[capitalise,noabbrev,nameinlink]{cleveref} 

\usepackage[noend]{algpseudocode} 
\usepackage{xcolor}


\setlength{\algomargin}{0pt} 
\title{Strongly Sublinear Algorithms for Testing Pattern Freeness}

\ThCSauthor{Ilan Newman}{ilan@cs.haifa.ac.il}
\ThCSaffil{Department of Computer Science, University of Haifa, Israel}

\ThCSauthor{Nithin Varma}{nithinvarma@cmi.ac.in}
\ThCSaffil{Max Planck Institute for Informatics, SIC, Saarbrücken, Germany}
\ThCSthanks{Invited paper from ICALP 2022. The first author was supported by the Israel Science Foundation, grant number 379/21. Most of this work was done while the second author was an Assistant Professor at the Chennai Mathematical Institute, India.}

\ThCSyear{2024}
\ThCSarticlenum{1}
\ThCSdoi{10.46298/theoretics.24.1}
\ThCSreceived{Oct 10, 2022}
\ThCSrevised{Oct 26, 2023}
\ThCSaccepted{Nov 20, 2023}
\ThCSpublished{Jan 12, 2024}
\ThCSkeywords{Property testing, Pattern freeness, Sublinear algorithms}

\ThCSshortnames{I.\ Newman and N.\ Varma}
\ThCSshorttitle{Strongly Sublinear Algorithms for Testing Pattern Freeness}


\crefname{observation}{Observation}{Observations}
\crefname{claim}{Claim}{Claims}

\addbibresource{references.bib}



\newcommand{\nvedit}[1]{{\color{magenta} #1}}

\newcommand{\den}{\mathsf{den}}
\DeclareMathOperator{\polylog}{polylog}

\DeclareMathOperator{\Hdist}{Hdist}
\DeclareMathOperator{\Ddist}{Ddist}
\DeclareMathOperator{\dist}{dist}
\DeclareMathOperator{\St}{St}
\DeclareMathOperator{\NE}{NE}
\newcommand{\eps}{\varepsilon}
\newcommand{\seq}{\mathsf{seq}}
\newcommand{\R}{\mathbb{R}}
\newcommand{\N}{\mathbb{N}}
\newcommand{\bx}{\mathsf{box}}
\newcommand{\cC}{\mathcal{C}}

\DeclareMathOperator{\E}{\textbf{E}}

\newcommand{\cS}{\mathcal{S}}
\newcommand{\ignore}[1]{}

\begin{document}

\maketitle
\begin{abstract}
For a permutation $\pi:[k] \to [k]$, a function $f:[n] \to \R$
contains a $\pi$-appearance if there exists $1 \leq i_1 < i_2 < \dots < i_k
\leq n$  such that for all $s,t \in [k]$, $f(i_s) < f(i_t)$ if and only if $\pi(s) < \pi(t)$. The function is $\pi$-free if it has no $\pi$-appearances. 
In this paper, we investigate the problem of testing whether an input function $f$ is $\pi$-free or whether $f$ differs on at least~$\eps n$ values from every $\pi$-free function. 
This is a generalization of the well-studied monotonicity testing and was first studied by Newman, Rabinovich, Rajendraprasad and Sohler~\cite{NewmanRRS19}. We show that for all constants $k \in \mathbb{N}$, $\eps \in (0,1)$, and permutation $\pi:[k] \to [k]$, there is a one-sided error $\eps$-testing algorithm for $\pi$-freeness of functions $f:[n] \to \R$ that makes $\tilde{O}(n^{o(1)})$ queries.
We improve significantly upon the previous best upper bound $O(n^{1 - 1/(k-1)})$ by Ben-Eliezer and Canonne~\cite{Ben-EliezerC18}. 
Our algorithm is adaptive, while the earlier best upper bound is known to be tight for nonadaptive algorithms. 
\end{abstract}



\section{Introduction}\label{sec:intro}
Given a permutation $\pi:[k] \to [k]$, a
function $f:[n] \to \R$ contains a $\pi$-appearance if there exists
$1 \leq i_1 < i_2 < \dots < i_k \leq n$  such that for all $s,t \in [k]$ it holds that $f(i_s) < f(i_t)$ if and only if $\pi(s) < \pi(t)$. 
In other words, the function values restricted to the indices $\{i_1, \dots,i_k\}$ respect the ordering in $\pi$. 
The function is $\pi$-free if it has no $\pi$-appearance. 
For instance, the set of all real-valued monotone non-decreasing functions over $[n]$ is $(2,1)$-free.
The notion of $\pi$-freeness is well-studied in combinatorics, where the famous Stanley-Wilf conjecture about the bound on the number of $\pi$-free permutations $f:[n] \to [n]$ has spawned a lot of work~\cite{Bona1997,Bona99,Arratia99,Klazar00,AlonF00}, ultimately culminating in a proof by Marcus and Tardos~\cite{MarcusT04}. 
The problem of designing algorithms to determine whether a given
permutation $f:[n] \to [n]$ is $\pi$-free is an active area of
research~\cite{AlbertAAH01,AhalR08,BerendsohnKM2021}, with 
linear time algorithms  for constant $k$  \cite{GuillemotM14,Fox13}.
Apart from the theoretical interest, practical motivations to study $\pi$-freeness include the study of motifs and patterns in time series analysis~\cite{DBLP:conf/kdd/BerndtC94,DBLP:conf/icdm/PatelKLL02,DBLP:conf/kdd/KeoghLC02}.

In this paper, we study property testing
of $\pi$-freeness,  as proposed by Newman, Rabinovich,
Rajendraprasad and Sohler~\cite{NewmanRRS19}. Specifically, given
$\eps \in (0,1)$, an $\eps$-testing algorithm for $\pi$-freeness
accepts an input function $f$ that is $\pi$-free, and rejects if $f$ differs from every $\pi$-free function on at
least $\eps n$ values.\footnote{Algorithms in this area are typically randomized, and the decisions to
  accept or reject are with high constant probability. See~\cite{RubinfeldS96,GoldreichGR98} for definitions of property testing.} The algorithm is given oracle access to the function $f$ and the goal is to minimize the number of queries made by the algorithm.
This problem is a generalization of the well-studied monotonicity
testing on the line ($(2,1)$-freeness), which was one of the
first works in combinatorial property testing, and is still being 
studied actively~\cite{DGLRRS99,EKK+00,BGJRW12,CS13a,Belovs18,DixitRTV18,PallavoorRV18}.

Newman, Rabinovich, Rajendraprasad and Sohler~\cite{NewmanRRS19} showed that for a general permutation $\pi$ of length $k$, the property of $\pi$-freeness can be
$\eps$-tested using a nonadaptive\footnote{An algorithm whose queries
  do not depend on the answers to previous queries is a nonadaptive
  algorithm. It is adaptive otherwise.} algorithm of query complexity
$O_{k,\eps}(n^{1 - 1/k})$.\footnote{Throughout this work, we are
  interested in the parameter regime of constant $\eps \in (0,1)$ and
  $k$. The notation $O_{k,\eps}(\cdot)$ hides a factor that is an
  arbitrary function of these parameters.} Additionally, they showed
that, for nonadaptive algorithms, one cannot obtain a significant
improvement on this upper bound for $k \geq 4$. 
In a subsequent work, Ben-Eliezer and Canonne~\cite{Ben-EliezerC18} improved this upper bound to $O_{k,\eps}(n^{1 - 1/(k-1)})$, which they showed to be tight for nonadaptive algorithms.
For monotone permutations $\pi$ of length $k$, namely, either $(1,2,\dots,k)$ or $(k,k-1,\dots,1)$, Newman et al.~\cite{NewmanRRS19} presented an algorithm with query complexity $(\eps^{-1}\log n)^{O(k^2)}$ to $\eps$-test $\pi$-freeness. The complexity was improved, in a sequence of works~\cite{Ben-EliezerCLW19,Ben-EliezerLW19}, to $O_{k,\eps}(\log n)$, which is optimal for constant $\eps$ even for the special case of testing $(2,1)$-freeness~\cite{Fischer04}. 

Despite the aforementioned advances in testing freeness of
monotone permutations, improving the complexity of testing freeness of
arbitrary permutations has remained open all this while.
For arbitrary permutations of length at
most $3$, Newman et al.~\cite{NewmanRRS19} gave an adaptive algorithm
for testing freeness 
with query complexity $(\eps^{-1} \log n)^{O(1)}$.  However, the case
of general $k > 3$ has remained elusive. In particular, the
techniques of \cite{NewmanRRS19} for $k=3$ do not seem to generalize
even for $k=4$.

As remarked above, optimal \emph{nonadaptive} algorithms are known
 for any $k$~\cite{Ben-EliezerC18}, but,  their complexity tends to
 be linear in the input length as
$k$ grows. 
For the special case of $(2,1)$-freeness, it is well-known that adaptivity does not help at all in improving the complexity of testing~\cite{EKK+00,Fischer04}. 
Adaptivity is known to help somewhat for the case of testing freeness of monotone
permutations of length $k$, where, every nonadaptive algorithm has query
complexity $\Omega((\log n)^{\log k})$~\cite{Ben-EliezerCLW19}, and
the $O_{k,\eps}(\log n)$-query algorithm of Ben-Eliezer, Letzter, and Waingarten~\cite{Ben-EliezerLW19} is adaptive. Adaptivity significantly helps
in testing freeness of arbitrary permutations of length~$3$
as shown by \cite{NewmanRRS19} and \cite{Ben-EliezerC18}.

\subsection{Our results} In this work, we give adaptive $\eps$-testing
algorithms for $\pi$-freeness of permutations $\pi$ of arbitrary
constant length $k$ with complexity
 $\tilde{O}_{k,\eps}(n^{o(1)})$. 
 Hence,  testing $\pi$-freeness has quite efficient 
sublinear algorithms even for relatively large patterns. 
Our result shows a strong separation between adaptive and nonadaptive
algorithms for testing pattern freeness.

\begin{theorem}\label{thm:main}
Let $\eps \in (0,1),$ $k \in \mathbb{N}$ and $\pi:[k] \to [k]$ be a permutation. There exists an $\eps$-tester for $\pi$-freeness of functions $f:[n] \to \mathbb{R}$ with query complexity $\tilde{O}\left(\left(\frac{k}{\eps}\right)^{\Theta(\log \log n)}n^{k/\log \log \log n}\right)$. 
\end{theorem}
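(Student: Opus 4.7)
The plan is to design an adaptive recursive tester parameterized by a depth $d$: the base case uses the nonadaptive $\tilde O(n^{1-1/(k-1)})$-query algorithm of Ben-Eliezer--Canonne, and each recursion level exploits adaptivity to drive down the exponent of $n$. Choosing $d = \Theta(\log\log\log n)$ will yield the claimed $n^{O(1/\log\log\log n)}$ bound. The backbone of the recursion is a \emph{split-and-extend} subroutine. Fix any nontrivial split of $\pi$ into a prefix $\pi_1$ of length $j$ and a suffix $\pi_2$ of length $k-j$. I would first prove a structural density lemma: if $f$ is $\eps$-far from $\pi$-free, then there are $\Omega_\eps(n)$ ``split positions'' $p\in[n]$, each equipped with value intervals $I^L_p, I^R_p\subseteq\R$ consistent with the relative order that $\pi$ imposes at position $j$, such that $f$ restricted to positions in $[1,p]$ with values in $I^L_p$ is significantly far from $\pi_1$-free, and symmetrically for the suffix. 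The proof uses bucketing into $\poly(k/\eps)$ value- and position-groups and pigeonholing on the $\Omega(\eps n)$ disjoint $\pi$-appearances guaranteed by $\eps$-farness.

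The depth-$d$ tester then samples $\tilde O(n^{\alpha_d})$ candidate split positions and, on each, invokes two depth-$(d-1)$ sub-testers for the two shorter patterns on the restricted subdomains, reporting a $\pi$-appearance whenever both recursive calls do. This gives a recurrence of roughly the form
\[ T_{k,d}(n,\eps) \le \tilde O(n^{\alpha_d}) \cdot \bigl( T_{j,d-1}(n,\eps') + T_{k-j,d-1}(n,\eps') \bigr), \]
with $\alpha_0 = 1 - 1/(k-1)$ and $\eps' = \eps^{O(1)}$. Solving this with an appropriately decaying sequence $\alpha_d$ and keeping the depth at $d = \Theta(\log\log\log n)$ yields the claimed exponent; the exact rate of decay is dictated by how the adaptive sampling interacts with the structural lemma at each level, and the induction is carried out jointly over $k$ and $d$.

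The main obstacle will be preserving the distance parameter through the restrictions. A naive bucketing argument would force $\eps$ to degrade polynomially in $n$ per level, destroying the recursion. To overcome this, I expect to need an anchor-averaging lemma guaranteeing that a constant fraction of random split positions $p$ induce sub-instances that are $\eps^{O(1)}$-far from freeness of the relevant sub-pattern, with the degradation independent of $n$. A secondary obstacle is that the value intervals $I^L_p, I^R_p$ cannot be identified exactly without too many queries; the algorithm must instead approximate them by sampling and tolerate slack at the boundaries, which introduces additional error that needs to be absorbed into the value-bucket granularity. A third subtlety is the uniformity of the structural lemma over \emph{all} splits $\pi_1\pi_2$ of \emph{every} sub-permutation of length at most $k$, since the recursion branches over all such splits; this is what produces the $\tilde O_{k,\eps}$ prefactor but not additional $n$-dependence. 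With these tools in hand, induction on the pair $(k,d)$ should close the argument.
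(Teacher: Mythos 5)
Your proposal has a genuine gap, and it sits at the heart of the problem rather than in the bookkeeping. The prefix/suffix split at a position $p$ has no way to handle $\pi$-appearances that are \emph{local}: if $f$ is $\eps$-far from $\pi$-free because it contains $\eps n/k$ disjoint appearances each confined to a tiny window of indices (and of values), then no split position separates the legs of more than $O(1)$ of them, so your structural density lemma is simply false in this regime -- the restriction of $f$ to $[1,p]$ contains full $\pi$-appearances, not a surplus of $\pi_1$-appearances that must be completed on the right. The paper's algorithm is forced, in exactly this case, to \emph{self-recurse on the same pattern $\pi$ over a domain smaller by a factor $m=n^{\delta}$}; it is this shrinking of the domain (not of the pattern length) that drives the recursion of depth $\Theta(\log\log\log n)$. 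Your recursion only ever shortens the pattern, so it bottoms out after at most $k$ levels, and with the Ben-Eliezer--Canonne base case at exponent $1-1/(k-1)$ the recurrence $T_{k,d}\le \tilde O(n^{\alpha_d})(T_{j,d-1}+T_{k-j,d-1})$ cannot produce an $n^{o(1)}$ bound. A second, independent problem is the combination step: for an interleaving split such as $\pi=(2,4,1,3)$ with $j=2$, the value constraints between the prefix legs $(2,4)$ and the suffix legs $(1,3)$ cannot be encoded by one interval $I^L_p$ and one interval $I^R_p$; an arbitrary $\pi_1$-appearance on the left and $\pi_2$-appearance on the right need not combine into a $\pi$-appearance. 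This is why the paper must recurse on the stronger primitive of \emph{$\phi$-legged} $\nu$-appearances (sub-patterns with a prescribed assignment of legs to boxes of a two-dimensional grid), relaxed to the hybrid task of finding either such an appearance or an unrestricted $\pi$-appearance.

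The missing structural tool is the Marcus--Tardos theorem. The paper partitions the point set of $f$ into an $m\times m$ grid of boxes ($m=n^{\delta}$, found by sampling $\tilde O(m)$ points); either the nonempty boxes themselves already exhibit $\pi$ (so the tester rejects), or by Marcus--Tardos there are only $O(m)$ dense boxes, which makes the averaging over configurations (single-box versus multi-component) go through with only a constant-factor loss in $\eps$ per level. Your $\poly(k/\eps)$-bucket pigeonholing gives neither the $O(m)$ sparsity guarantee nor a domain reduction by a polynomial factor, and without it the "anchor-averaging lemma" you hope for does not materialize. To repair the proposal you would need to replace the one-dimensional split by the two-dimensional box decomposition, add the self-recursion branch for one-component configurations, and strengthen the recursive subproblem to restricted (leg-mapped) appearances.
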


\subsection{Discussion of our techniques}
Our algorithm has one-sided error and rejects only if it finds a $\pi$-appearance in the input function $f:[n] \to \R$.
In the following, we present some of the main  ideas behind a $\tilde{O}(\sqrt{n})$-query algorithm
for detecting a
$\pi$-appearance in a function $f$ that is $\eps$-far from $\pi$-free,
for a permutation $\pi$ of length $4$. The case of length-$4$ permutations is not very
different from the general case (where, we additionally recurse
on problems of smaller length patterns). The $\tilde{O}(\sqrt{n})$
queries algorithm is much simpler than the general one, but
it outlines many of the ideas involved in the latter. Additionally, it already beats the lower bound of 
$\Omega(n^{2/3})$ on the complexity of nonadaptive algorithms for $\pi$-freeness testing patterns of length $4$~\cite{Ben-EliezerC18}. A more detailed
description appears in~\cref{sec:top-level}. The formal description of the general
algorithm is given in~\cref{sec:generalizedtesting}. 
  
For a parameter $\eps \in (0,1)$, a function $f$ is $\eps$-far
from $\pi$-free if at least $\eps n$ values of $f$ need to be changed in order to make it $\pi$-free.
In other words, the Hamming distance of $f$ to the closest real-valued $\pi$-free  function over $[n]$ is at least $\eps n$. 
A folklore fact is that the Hamming distance and
the deletion distance of $f$ to $\pi$-freeness
are equal, where the deletion distance of $f$
to $\pi$-freeness is the cardinality of the smallest set $S \subseteq [n]$ such that
$f$ restricted to $[n]\setminus S$ is $\pi$-free. 
By virtue of this equality, a function that is $\eps$-far from $\pi$-free has a matching of $\pi$-appearances of cardinality at least $\eps n/4$, where a matching of $\pi$-appearances is a collection of $\pi$-appearances such that no two of them share an index.
This observation 
facilitates our algorithm and all previous algorithms on testing
$\pi$-freeness, including monotonicity testers. 

The basic ingredient in our algorithms is the use of a natural representation of  $f:[n] \to \R$ by a Boolean function over a grid $[n] \times R(f)$, where $R(f)$ denotes the range of $f$.  
Specifically, we visualize the function as a grid of $n$ points in
$\R^2$, such that for each $i \in [n]$, the pair $(i,f(i))$ is a
point 
of the grid. We use $G_n$ to denote this grid of points. This view has been useful in the design of approximation
algorithms for the related and fundamental problem of estimating the
length of Longest Increasing Subsequence (LIS) in a real-valued array  \cite{SaksS10,RubinsteinSSS19,MitzenmacherS21,NV20}.
Adopting this view, for any permutation $\pi:[k] \to [k]$, a $\pi$-appearance at
$(i_1, \ldots ,i_k)$ in $f$ corresponds naturally to a
$k$-tuple of points $(i_j, f(i_j)), ~ j=1
\ldots k$ in $G_n,$ for which their relative order (in $G_n$) forms a  
$\pi$-appearance.   The converse
is also true: every $\pi$-appearance in the Boolean grid
$G_n$ corresponds to a $\pi$-appearance in $f$.

We note that the grid $G_n$ is neither known to, nor directly accessible by, the
algorithm. In particular, $R(f)$ is not assumed to be known. A main
first step in our algorithm is to approximate the grid $G_n$ by a
 coarser $m \times m$ grid $G_{m,m}$ of boxes, for  a parameter $m = o(n)$ that will determine the query complexity. 
   The grid $G_{m,m}$ is defined as follows. Suppose that we have a
   partition of $R(f)$ into $m$ disjoint contiguous intervals of
   increasing values, referred to here as `layers', $I_1,
   \ldots ,I_m$, and let $S_1, \ldots ,S_m$ be a partition of $[n]$
   into $m$ contiguous intervals of equal size, referred to as `stripes'. These two partitions
   decompose  $G_n$ and the $f$-points in it into $m^2$
   boxes and forms the grid $G_{m,m}$. The
   $(i,j)$-th cell of this grid is the Cartesian product $S_i \times
   I_j$, and is denoted $\bx(S_i, I_j)$.
We view the nonempty boxes in $G_{m,m}$ as a coarse
approximation of $G_n$ (and of the input function, equivalently). The
grid $G_{m,m}$ has a natural order on its boxes (viewed as points in
$[m] \times [m]$).
   
While $G_{m,m}$ is also not directly accessible to the algorithm, it
can be well-approximated very efficiently. 
We can do this by sampling $\tilde{O}(m)$ indices from $[n]$ independently and uniformly at random and making queries to those indices to identify and \emph{mark} the boxes in $G_{m,m}$ that contain a non-negligible
density of points of $G_n$. This provides a good enough
approximation of the grid $G_{m,m}$. 
For the rest of this high-level explanation, assume that we have fixed
$m << n$, and we know $G_{m,m}$; that is, we assume that we know the number of
points of $G_n$ belonging to each  box in $G_{m,m}$, but not necessarily the points themselves.

If we find $k$ nonempty boxes in $G_{m,m}$ that form a
$\pi$-appearance when viewed as points in the $[m] \times [m]$ grid, then $G_n$ (and
hence $f$) contains a $\pi$-appearance for any set of $k$ points that
is formed by selecting one point from each
of the corresponding boxes. 
See \cref{fig:1}(A) for
such a situation, for $\pi = (3,2,1,4)$.  We first detect such
$\pi$-appearances by our knowledge of $G_{m,m}$. However, the converse
is not true: it could be that $G_n$ contains many $\pi$-appearances,
where the corresponding points, called `legs', are in
boxes that share layers or stripes, and hence do not form
$\pi$-appearances in $G_{m,m}$.
  See e.g.,
 \cref{fig:1}(B) for such an appearance for $\pi = (3,2,1,4)$.
Thus, if the function is far from being $\pi$-free and no
$\pi$-appearances are detected in $G_{m,m}$, then there must be many
$\pi$-appearances in which some legs share a layer or a
stripe in $G_{m,m}$. In this case,  the seminal result of Marcus and Tardos~\cite{MarcusT04},
implies that only $O(m)$ of the boxes in $G_{m,m}$ are nonempty. 
An averaging argument implies that if $f$ is $\eps$-far from
being $\pi$-free, then after deleting layers or
stripes in $G_{m,m}$  with $\omega(1)$-dense boxes, we are still left with a partial function
(on the undeleted points) that is $\eps'$-far from being
$\pi$-free, for a large enough $\eps'$.
 
For the following high-level description we consider $\pi=(3,2,1,4)$, although all the
following ideas work for any permutation of length $4$. 
Any $\pi$-appearance has its four legs spread over at most $4$ marked boxes. 
This implies that there are only constantly many non-isomorphic ways
of arranging the marked boxes containing any particular
$\pi$-appearance, in terms of the order relation among the marked boxes,
and the way the legs of the $\pi$-appearance are included in them. 
These constantly many ways are called `configurations' in the
sequel. Thus any $\pi$-appearance is consistent with a certain
configuration. Additionally, in the case that  
multiple points in a $\pi$-appearance share some marked boxes,
this appearance induces the appearances of permutations of length smaller than $4$ in each box (which are sub-permutations $\nu$ of $\pi$). 
If a constant fraction of the $\pi$-appearances are spread across multiple marked boxes, there will be many such $\nu$-appearances in the marked boxes in the coarse grid.
Hence, one phase of our algorithm will run tests for $\nu$-appearances
for smaller patterns $\nu$ (which can be done in $\polylog n$ queries
using known testers for patterns of length at most $3$) on each marked
box, and combine these $\nu$-appearances to detect a $\pi$-appearance,
if any.  This phase, while seemingly simple will require extra care,
as combining sub-patterns appearances into a global $\pi$-appearance is
not always possible. This is a major issue in the general
case for $k > 4$. 

The simpler case is when there is a constant fraction of
$\pi$-appearances such that all $4$ points of each such appearance
belong to a single marked box. This can be solved by randomly sampling
a few marked boxes and querying all the points in them to see if there
are any $\pi$-appearances. The
case that a constant fraction of the $\pi$-appearances have their legs belonging to the same
layer or the same stripe is an easy extension of
the `one-box' case.

To obtain the desired query complexity, consider first setting $m =
\tilde{O}(\sqrt{n})$. Getting a good enough estimate of $G_{m,m}$ as
described above take $\tilde{O}(m)=\tilde{O}(\sqrt{n})$ queries. Then, testing each
box for $\nu$-freeness, for smaller permutations $\nu$ takes
$\polylog n$ per test, but since this is done for all marked
boxes, this step also takes $\tilde{O}(m)=\tilde{O}(\sqrt{n})$. Finally, in the last
simpler case, we may just query all indices in a sampled box that contains at most
$n/m = \Theta(\sqrt{n})$ indices, by our setting of~$m$. This results in a
$\tilde{O}(\sqrt{n})$-query tester for $\pi$-freeness.

To obtain a better complexity, we reduce the value of $m$, and, in the last step, we randomly sample 
a few marked boxes and run the algorithm recursively. 
This is so, since, in the last step, we are in the case that for a
constant fraction of the $\pi$-appearances, all four legs of each
$\pi$-appearance  belong to a single marked box (or a constant number
of marked boxes sharing a layer or stripe). The depth of recursion
depends monotonically on  $n/m$ and the larger it is the
smaller is the query complexity. 
The bound we describe in this article
is $n^{O(1/\log \log \log n)}$ which is due to the exponential deterioration of the
distance parameter $\eps$ in each recursive call.
Our algorithm for permutations of length $k > 4$  uses, in
addition to the self-recursion, a 
recursion on $k$ too.

Finally, 
we call $\nu$-freeness or $\pi$-freeness algorithms on
marked boxes (or a collection of constantly many marked boxes sharing a layer or stripe) and not the entire grid. Since we do not know which points belong to the marked boxes, but 
only know that their density is significant, we can access points in them only via sampling and treating 
points that fall outside the desired box as being \emph{erased}.
This necessitates the use of erasure-resilient
testers~\cite{DixitRTV18}. Such testers are known  for all permutation patterns of length at most $3$~\cite{DixitRTV18,NV20,NewmanRRS19}. 
In addition, the basic tester we design is also erasure-resilient,
which allows us to recursively call the tester on appropriate subsets of marked boxes.


\paragraph{Some additional challenges we had to overcome:} In the recursive
algorithm for $k$-length permutation freeness, $k \geq 4$, we need to find $\nu$-appearances  that are restricted to appear in specific
configurations, for smaller
length permutations $\nu$. To exemplify this notion, consider testing
$\nu=(1,3,2)$-freeness. In the unrestricted setting, $f: [n]
\mapsto \R$ has a $\nu$-appearance if the values at any three indices have a
$\nu$-consistent order. 
In a restricted setting, we may ask ourselves
whether $f$ is free of $\nu$-appearances where the indices
corresponding to the $1,3$-legs of a $\nu$-appearance are of value at most $n/2$ (that is in
the first half of $[n]$), while the index corresponding to the $2$-leg is larger than
$n/2$. This latter property seems at least as hard to test as the
unrestricted one.  In particular, for the $\nu$-appearance as described above, it could be that while $f$ is far
from being $\nu$-free in the usual sense, it is still free of having restricted
$\nu$-appearances.  In our algorithm, we need to test (at lower
recursion levels) freeness from such restricted appearances. This extra
restriction is discussed at a high level in~\cref{sec:top-level}. For a formal definition of the 
restricted testing problem and how it fits into our final algorithm, see~\cref{sec:generalizedtesting}.

\subsection{Open directions}
\paragraph{Testing restricted $\pi$-freeness:}
Testing for restricted $\pi$-appearance, as described above,  is 
at least as hard as testing $\pi$-freeness. For monotone patterns (and hence
$2$-patterns) testing freeness and testing restricted appearances are
relatively easy (can be done in $\polylog n$
queries). 
  For patterns of size $3$ and more, the complexity
of testing freeness of restricted appearances is currently open.

\paragraph{Weak $\pi$-freeness:}
In the definition of
$\pi$-freeness, we required strict inequalities on function values to
have an occurrence of the pattern. A natural variant is to allow weak
inequalities, that is -- 
for a set indices $1 \leq i_1 < i_2 \dots < i_k \leq n$  a {\em weak}-$\pi$-appearance is
when 
for all $s,t \in [k]$ it holds that $f(i_s) \leq f(i_t)$ if and only if $\pi(s) < \pi(t)$. 
Such a relaxed requirement would mean that having
a collection of $k$ or more equal values  is already a $\pi$-appearance
for any pattern $\pi$.  For monotone patterns of length $k$, the deletion
distance equals the Hamming distance, for any $k$, for this relaxed
definition as well. We do not know if this is true for larger $k$ for
non-monotone patterns in general, although we suspect
that the Hamming distance is never larger than the deletion distance
by more than a constant factor. Proving this will be enough to make our
results true for testing freeness of any constant size forbidden permutation, even with the relaxed definition. We show that the Hamming distance is equal to the deletion
distance for patterns of length at most $4$. 
 Hence,~\cref{thm:main} also holds for weak-$\pi$-freeness for $k \leq
 4$.

Another similarly related variant  is when the forbidden order pattern is
not necessarily a permutation (that is, an arbitrary function from $[k]$ to $[k]$ which is not one-to-one). 
For example, for the $4$-pattern
$\alpha=(1,2,3,1)$, an $\alpha$-appearance in $f$ at indices $i_1 <
i_2 < i_3 < i_4$ is when $f(i_1) < f(i_2) < f(i_3)$ and $f(i_4) =
f(i_1)$, as dictated by the order in $\alpha$.  For testing freeness of such patterns, an $\Omega(\sqrt{n})$ adaptive lower bounds exist (by a
simple probabilistic argument) even for the very simple case of
$(1,1)$-freeness, which corresponds to the property of being a one-to-one function. 

An interesting point to mention, in this context, is that for testing freeness of forbidden
permutations, a major tool that we use is the Marcus-Tardos
bound~\cite{MarcusT04}. Namely,  
that the number of $1$'s in an $m \times m$ Boolean matrix that does not contain a specific
permutation matrix of order $k$ is $O(m)$. For
non-permutation patterns, similar bounds are not true in general
anymore, but  do hold in many cases (or hold in a weak sense, e.g.,
only slightly more than linear).
 In such cases, the Marcus-Tardos bound
could have allowed relatively efficient
testing. However,  the lower bounds hinted above for the $(1,1)$-pattern makes the
testing problem 
completely different from that of testing forbidden permutation
patterns.

\paragraph{Restricted functions:}
In this paper we always consider the set of functions $f:[n] \mapsto
\R$ with no restrictions.  Interesting questions occur when the set of
functions is more restricted.  One natural such restriction is for functions of bounded or
restricted range (for the special case of $(2,1)$-freeness, such a study was initiated by Pallavoor, Raskhodnikova and Varma~\cite{PallavoorRV18} and followed upon by others~\cite{Belovs18,NV20}). 
We do know that
in the very extreme case, that is, for functions from the line $[n]$ to a
constant-sized range,  pattern freeness is testable in constant time
even for much more general class of forbidden
patterns~\cite{AlonKNS00}. Apart from this extreme restriction,
or the results for $2$-patterns stated above, 
we are not aware of results concerning functions of bounded range
(e.g., range that is $n^2$ or $\sqrt{n}$). 

Lastly, if we restrict our attention to functions $f:[n] \to [n]$ that are themselves
permutations, Fox and Wei~\cite{FoxW18} argued that for some special types of distance measures
such as the rectangular-distance and Kendall's tau distance,
testing $\pi$-freeness can be done in constant query complexity.
Testing $\pi$-freeness w.r.t.\ the Hamming or deletion distances is
very different, and 
still remains open for this setting.

\paragraph{Other open questions:} 
The major open question left in this paper 
is to determine the exact (asymptotic)  complexity of testing
$\pi$-freeness of arbitrary permutations $\pi:[k] \to [k], ~ k \geq
3$. While the gaps for $k=3$ are relatively small (within $\polylog n$
range), the gaps are yet much larger for $k \geq 4$. 
We do not have any reason to think that the upper bound obtained
in this draft is tight. We did not try to optimize the exponent of $n$
in the $\tilde{O}(n^{o(1)})$ expression, but the current methods
do not seem to bring down the query complexity to $\polylog n$. We conjecture, however,
that the query complexity is  $\polylog n$
for all constant $k$.
Another open question is whether the complexity of two-sided
error testing might be lower than that of one-sided error testing.

Finally, Newman and Varma~\cite{NV20} used  lower bounds on
testing pattern freeness of monotone patterns of length $k\geq 3$ (for
nonadaptive algorithms), to obtain lower bounds on the query complexity of 
nonadaptive algorithms for LIS estimation. Proving any lower bound better than $\Omega(\log n)$ for
adaptively testing freeness, for arbitrary permutations of length $k$ for $k \geq
3,$ may translate in a similar way to lower bounds on adaptive algorithms for
LIS estimation.

\paragraph{Organization:}
~\cref{sec:prelim} contains the notation, important definitions, and a discussion of some
key concepts related to testing $\pi$-freeness. 
\cref{sec:top-level} contains a high-level overview of an $\tilde{O}(\sqrt{n})$-query algorithm for patterns of length $4$.
The formal description of our $\pi$-freeness tester for permutations $\pi$ of length $k \geq 4$ and the proof of correctness appear in~\cref{sec:generalizedtesting}.

\section{Preliminaries and discussion}\label{sec:prelim}


For a function $f:[n] \to \R$, we denote by $R(f)$ the image of $f$.
We often refer to the elements of the domain $[n]$ as \emph{indices},
and the elements of $R(f)$ as \emph{values}.
For $S \subseteq [n]$, 
 $f|_{S}$ denotes the restriction of $f$ to $S$.
Throughout, $n$ will denote the  domain size of the function $f$.

We often refer to events in a probability space. For ease of
representation, we will say that
an event $E$ occurs with high probability, denoted `w.h.p.', if $\Pr(E) >
1 - n^{-\log n}$, to avoid specifying accurate constants.

Let $\mathcal S_k$ denote the set of all permutations of length~$k$. 
We
view $\pi = (a_1, \ldots ,a_k) \in \mathcal{S}_k$ as a function (and
not as a cyclus), that is, where $\pi(i)  = a_i, ~  i \in [k]$.
We refer to $a_i$ as the $i$th value in~$\pi$, and as the
$a_i$-leg of~$\pi$. Thus, e.g., for $\pi=(4,1,2,3),$ the first value is
$4$, and the third is $2$, while the $4$-leg of~$\pi$ is at the first
place and its $1$-leg is at the second place.
We often refer to $\pi \in \mathcal S_k$ as a $k$-pattern.

\subsection{Deletion distance  vs.\ Hamming distance}

 The distance of a function from the property of being $\pi$-free can
 be measured in several  ways. In this paper, we use 
  Hamming and deletion distances as  
  are defined next.

\begin{definition}[Deletion and Hamming distance]
Let $f: [n] \to \R$. The deletion distance of $f$ from being $\pi$-free
is $\Ddist_{\pi}(f) = \min \{|S|: ~ S \subseteq [n], ~
f|_{[n]\setminus S} ~ \mbox
{is } \pi\mbox{-free} \}$. Namely, it is the cardinality of the smallest set $S \subseteq
[n]$ that intersects each $\pi$-appearance in $f$.
The Hamming distance of $f$ from being $\pi$-free, $\Hdist_{\pi}(f)$ is the minimum of $\dist(f,f') = |\{i:~ i \in [n],~ f(i) \neq f(i') \}|$ over all
functions $f': [n] \to \R$ that are $\pi$-free.
\end{definition}
For $0 \leq \eps <1$ we say that $f$ is $\eps$-far from $\pi$-freeness in deletion
distance, or Hamming distance, if $dist_{\pi}(f)  \geq
\eps n$, and otherwise we say that $f$ is $\eps$-close to
$\pi$-freeness, where $dist_{\pi}(f)$ is the corresponding distance.

\begin{claim}\label{cl:ham-del}
$\Ddist_{\pi}(f) = \Hdist_{\pi}(f)$
\end{claim}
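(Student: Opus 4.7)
The plan is to establish the two inequalities separately. The easy direction, $\Ddist_\pi(f) \le \Hdist_\pi(f)$, follows from the observation that if $f'$ is $\pi$-free and differs from $f$ on a set $T$ with $|T|=\Hdist_\pi(f)$, then $f$ agrees with $f'$ on $[n]\setminus T$, so $f|_{[n]\setminus T}$ inherits $\pi$-freeness and $T$ is a valid deletion set.

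For the reverse direction $\Hdist_\pi(f) \le \Ddist_\pi(f)$, I would fix a minimum deletion set $S$, so that $g := f|_{[n]\setminus S}$ is $\pi$-free, and construct a $\pi$-free $f'$ that agrees with $f$ off $S$. For each $i \in S$, let $i^-$ denote the largest element of $[n]\setminus S$ strictly below $i$ (if any) and $i^+$ the smallest such element strictly above $i$. Define $f'(i) := f(i^-)$ when $i^-$ exists and $f'(i) := f(i^+)$ otherwise; off $S$, set $f' := f$. (The degenerate case $[n]\setminus S = \emptyset$ is trivial.)

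To verify $\pi$-freeness of $f'$, I would argue by contradiction. Suppose $f'$ has a $\pi$-appearance at $i_1 < \cdots < i_k$. For each $j$, associate a witness $i_j' \in [n]\setminus S$ defined as $i_j$ itself if $i_j \notin S$; as $i_j^-$ if $i_j \in S$ and $i_j^-$ exists; and as $i_j^+$ if $i_j \in S$ has no predecessor in $[n]\setminus S$. By construction $f(i_j') = f'(i_j)$, and since $\pi$ is a permutation the values $f'(i_j)$ are pairwise distinct, so the $i_j'$ are distinct as well. A short case analysis, splitting on whether each of $i_j, i_{j+1}$ lies in $S$ and whether $i_j^-, i_{j+1}^-$ exist, then shows $i_j' \le i_{j+1}'$ for every $j$, hence $i_1' < \cdots < i_k'$ inside $[n]\setminus S$, exhibiting a $\pi$-appearance in $g$ and contradicting $\pi$-freeness of $g$.

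The main obstacle is the bookkeeping in the monotonicity case analysis, most delicately in the sub-case where $i_j \in S$ lacks a predecessor and $i_j' = i_j^+$ can lie to the right of $i_j$. I would handle it by noting that if $i_{j+1} \in S$ also lacks a predecessor, then $i_j' = i_{j+1}'$ equals the globally smallest index of $[n]\setminus S$, contradicting distinctness; in every remaining sub-case, this same smallest index is automatically $\le i_{j+1}'$. The other sub-cases are routine, using only that $i_j < i_{j+1}$ implies $i_j^- \le i_{j+1}^-$ whenever both are defined.
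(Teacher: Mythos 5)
Your proposal is correct and follows essentially the same route as the paper: fill each deleted index with the value of its nearest surviving predecessor (falling back to the successor when no predecessor exists), which changes at most $|S|$ values and preserves $\pi$-freeness. Your witness-mapping argument carefully verifies the $\pi$-freeness claim that the paper's proof only asserts, so the extra case analysis is a welcome addition rather than a deviation.
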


\begin{proof}
It is obvious from the definition that $\Ddist_{\pi}(f) \leq
  \Hdist_{\pi}(f)$. For the other direction, assume that
  $\Ddist_{\pi}(f) = d$. Let $S = \{i_1, i_2, \dots, i_d\} \subseteq [n]$ for $i_1 < i_2 \ldots < i_d$ be such that $f|_{[n]
    \setminus S}$ is $\pi$-free. 
    If
  $i_1 >1$, 
  consider the function $f':[n] \to \R$ such that for $i \notin S$, $f'(i) = f(i)$ and for $j \in [d]$, $f'(i_j) = f(m_j)$, where $m_j$
  is the largest element in $[i_j] \setminus S$. It can be seen that $f'$ is $\pi$-free.
  Moreover, $\Hdist(f,f') \leq d$, which proves the
  claim for $S$ such that $i_1 > 1$. 
    If $i_1 = 1$, let $s \in [n]$ be the smallest index not in $S$.
    We consider the function $f'':[n] \to \R$, where $f''(i) = f(s)$ for all $i \in [s-1]$, where $[s-1] \subseteq S$, by definition of $s$.
    Now, the deletion distance of $f''$ is less than $d$ and we are back to the case that the smallest index being deleted is greater than $1$.
\end{proof}

\cref{cl:ham-del} is extremely important for testing
$\pi$-freeness, and is what gives rise to \emph{all} testers of
monotonicity, as well as $\pi$-freeness that are known. This is due to the fact
that the tests are really designed for the deletion distance, rather than
the Hamming distance. The folklore observation made
in~\cref{clm:matching} facilitates such tests, and~\cref{cl:ham-del} makes the tests work also for the Hamming distance.
Due to~\cref{cl:ham-del}, we say that a function $f$ is
$\eps$-far from $\pi$-free without specifying the 
distance measure.

  Let $\pi \in {\mathcal S}_k$  and $f:[n] \to \R$. A {\em matching} of
  $\pi$-appearances in $f$ is a collection of $\pi$-appearances 
  that are pairwise disjoint as sets of indices in $[n]$. 
The following claim is folklore and immediate from the fact that the size of a minimum
vertex cover of a $k$-uniform hypergraph is at most $k$ times the 
cardinality of a maximal matching.

\begin{claim}\label{clm:matching}
Let $\pi \in \mathcal{S}_k$. If  $f:[n] \to \R$  is $\eps$-far from being
$\pi$-free, then there exists a matching of $\pi$-appearances of size at least $\eps n /k$. 
\end{claim}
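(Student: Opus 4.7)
The plan is to pass through the deletion distance (using Claim~\ref{cl:ham-del}) and then invoke the standard $k$-uniform-hypergraph relation between matching number and vertex cover number. Concretely, I would introduce the $k$-uniform hypergraph $H_f$ on vertex set $[n]$ whose hyperedges are exactly the $\pi$-appearances in $f$. A $\pi$-tuple-matching in the sense of the definition is literally a matching in $H_f$, and a set $S \subseteq [n]$ for which $f|_{[n]\setminus S}$ is $\pi$-free is exactly a vertex cover of $H_f$ (removing at least one index from every $\pi$-appearance is what it means to destroy all $\pi$-appearances).

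Next, I would use Claim~\ref{cl:ham-del} to rewrite the hypothesis: since $f$ is $\eps$-far from $\pi$-free (in Hamming distance, say), we have $\Ddist_\pi(f) = \Hdist_\pi(f) \geq \eps n$. Interpreted in $H_f$, this says that the minimum vertex cover number $\tau(H_f)$ is at least $\eps n$.

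The final step is the folklore greedy bound: for any $k$-uniform hypergraph, $\tau(H) \leq k \cdot \nu(H)$, where $\nu(H)$ is the maximum matching size. I would justify this by taking any maximal matching $M$ in $H_f$; its vertex set has cardinality $k|M|$ and must be a vertex cover, because any hyperedge disjoint from this vertex set could be added to $M$, contradicting maximality. Combining with $\tau(H_f) \geq \eps n$ yields $\nu(H_f) \geq \eps n/k$, which is exactly the desired matching of $\pi$-appearances.

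No step here looks like a real obstacle: the only substantive input is Claim~\ref{cl:ham-del}, which is already established, and the $\tau \leq k \cdot \nu$ inequality is the standard one-line greedy argument. The main thing to be careful about is the translation between the two languages (removing indices vs.\ covering hyperedges) and verifying that vertex cover of $H_f$ really coincides with hitting every $\pi$-appearance, which is immediate from the definition of $H_f$.
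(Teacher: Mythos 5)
Your proposal is correct and is exactly the argument the paper intends: it relies on Claim~\ref{cl:ham-del} to pass to the deletion distance and then uses the standard fact that in a $k$-uniform hypergraph a maximal matching's vertex set is a vertex cover of size $k$ times the matching, which the paper cites as folklore without further elaboration. Your write-up just fills in the one-line justification the paper leaves implicit.
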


All our algorithms have one-sided error, i.e., they always accept
functions that are $\pi$-free.  For functions that are far from being
$\pi$-free, using~\cref{clm:matching}, our algorithms aim to
detect some $\pi$-appearance, providing a witness for the function to
not be $\pi$-free. Hence, in the description below, and throughout the
analysis of the algorithms, the input function is assumed to be
$\eps$-far from $\pi$-free.

\subsection{Viewing a function as a grid of points}

Let $f: [n] \to \R$. We view $f$  as points in an
$n \times |R(f)|$ grid $G_n$. 
The horizontal axis of $G_n$ is labeled with the
indices in $[n]$. 
The vertical axis of $G_n$ represents the image $R(f)$ and is labeled with the
distinct values in $R(f)$ in  increasing order, $r_1 < r_2 < \ldots <r_{n'}$, where
$|R(f)| = n' \leq n$. 
We refer to an index-value pair $(i,f(i)), i \in [n]$ in the grid as a \emph{point}.
The grid has $n$ points, to which our algorithms do not have direct
access. In particular, we do not assume that $R(f)$ is known. 
The function is one-to-one if $|R(f)|=n$. 
Note that if $M$ is a matching of $\pi$-appearances in $f$, then $M$ defines a
  corresponding matching of $\pi$-appearances in $G_n$. We will
 always consider this alternative view, where the matching $M$ is a set of disjoint
 $\pi$-appearances in the grid $G_n$.

\subsubsection{Coarse grid of boxes} 
For a pair of subsets $(S,I)$, where $S \subseteq [n]$ and
$I \subseteq R(f)$, we denote by $\bx(S,I)$, the subgrid $S \times I$
of $G_n$ alongwith with the set $\{(i,f(i)): i \in S, f(i) \in I\}$ of points  in $G_n$.
In most cases, $S$ and~$I$ will be intervals in $[n]$
and $R(f)$, respectively, and hence the name {\em box}.  The
\emph{size} of $\bx(S,I)$ is defined to be $|S|$.  A box is
\emph{nonempty} if it contains at least one point and is \emph{empty}
otherwise.

Consider an arbitrary collection of pairwise disjoint contiguous value intervals
${\mathcal L} = \{I_1, \ldots I_{m}\}$, such that $I \subseteq \bigcup_{j \in [m]} I_j$. The set ${\mathcal L}$ naturally defines a partition of the points in $\bx(S,I)$ into $m$ horizontal \emph{layers}, $\bx(S, I_j)$ for $j\in[m]$.
Assume that, in addition to a set of layers 
 $\mathcal L,$ we have a partition of $S$ into disjoint
  intervals $S = \bigcup_{i=1}^{m}  S_i$ where $S_i= [a_i,b_i],$ and
  $b_{i} < a_{i+1},~ i=1, \ldots m-1$. The family ${\mathcal S} = \{S_1,
  \ldots S_m\}$ induces a partition of $\bx(S,I)$ and the
  points in it, into $m$ vertical \emph{stripes}, $\bx(S_i, I)$ for $i \in [m]$. 
  The layering defined by
  $\mathcal L$  together with the stripes defined by $\mathcal S$ partition
  $\bx(S,I)$ into a coarse grid $G_{m,m}$ of boxes $\{
  \bx(S_i,I_j)\}_{i,j \in [m]}$ that is isomorphic
  to the grid $[m] \times [m]$. Note that $\bx(S,I)$ could even be the entire grid $G_n$.   
  Given such a gridding of $\bx(S,I)$, the layer of $\bx(S_i,I_j)$, denoted $L(\bx(S_i,I_j))$, is $\bx(S,I_j)$ and
  its stripe, denoted  $\St(\bx(S_i,I_j))$, is $\bx(S_i,I)$.

We say that \emph{layer $L$ is below layer
  $L'$}, and write $L < L',$  if the largest value of a point in $L$
is less than the smallest value of a point in $L'$.
For stripes $\St(S), \St(S')$, we write $\St(S) < \St(S')$
if the largest index in $S$ is smaller than the smallest index in $S'$. 
For the grid $G_{m,m}$ and two boxes $B_1, B_2$ in it, $B_1 < B_2$
if $L(B_1) < L(B_2)$ and $\St(B_1) < \St(B_2)$.



\subsubsection{Patterns among and within nonempty boxes}

Consider a coarse grid of boxes, $G_{m,m}$, defined as above on the grid of points $G_n$.
 There is a natural homomorphism from the points in $G_n$ to the
 nonempty boxes in $G_{m,m}$ where those points fall.
For $f$ and a grid of boxes
 $G_{m,m}$ as above, we refer to this homomorphism
 implicitly. 
This  homomorphism defines when 
$G_{m,m}$ contains a $\pi$-appearance in a natural way.  For example,
consider the permutation $\pi = (3,2,1,4)\in \mathcal
S_4$. We say that \emph{$G_{m,m}$ contains $\pi$} if there are nonempty boxes
$B_1, B_2, B_3, B_4$ such that $\St(B_1) < \St(B_2) < \St(B_3) < \St(B_4)$ and
$L(B_3) < L(B_2) < L(B_1) < L(B_4)$ (see~\cref{fig:1}(A)).

\begin{observation}\label{obs:homo}
   Let $\mathcal L, \mathcal S$ be a partition of $G_n$ into layers and stripes
   as above, with $|\mathcal L | =m, ~ |\mathcal S |=m$. If
   $G_{m,m}$ contains $\pi$ then $G_n$ (and equivalently $f$) has a $\pi$-appearance. 
 \end{observation}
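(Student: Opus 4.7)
The plan is to prove this by a direct point-picking argument that transfers the ``combinatorial'' pattern among boxes to an actual pattern among points in $G_n$. The observation asserts a structural implication, so no probabilistic or averaging tools are needed; the content is essentially that the definition of ``$G_{m,m}$ contains $\pi$'' was set up precisely so that it witnesses a $\pi$-appearance in $f$.

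First, I would unpack the hypothesis. The statement that $G_{m,m}$ contains $\pi$ means there exist nonempty boxes $B_1, \ldots, B_k$ in $G_{m,m}$ such that for all $s,t \in [k]$, $\St(B_s) < \St(B_t)$ iff $s < t$, and $L(B_s) < L(B_t)$ iff $\pi(s) < \pi(t)$. Since each $B_i$ is nonempty, for every $i \in [k]$ I can choose a point $p_i = (x_i, f(x_i)) \in B_i$, where $x_i \in S_{\sigma(i)}$ and $f(x_i) \in I_{\tau(i)}$ for the appropriate stripe and layer indices $\sigma(i), \tau(i)$.

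Next, I would verify the two ordering conditions that witness a $\pi$-appearance. For the indices: since $\St(B_1) < \St(B_2) < \cdots < \St(B_k)$ and stripes are defined as pairwise disjoint contiguous intervals of $[n]$ with $\St(S) < \St(S')$ meaning every index in $S$ is strictly less than every index in $S'$, we conclude $x_1 < x_2 < \cdots < x_k$. For the values: by hypothesis, $L(B_s) < L(B_t)$ iff $\pi(s) < \pi(t)$, and layers are pairwise disjoint contiguous value intervals with ``$L<L'$'' meaning every value in $L$ is strictly less than every value in $L'$. Hence $f(x_s) < f(x_t)$ iff $L(B_s) < L(B_t)$ iff $\pi(s) < \pi(t)$, which is exactly the definition of a $\pi$-appearance at indices $x_1 < \cdots < x_k$.

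There is essentially no obstacle here; the only point requiring a line of care is that ``$L(B_s) < L(B_t)$'' is defined in terms of the largest value in one layer being below the smallest in the other, so we really do get strict inequality on concrete function values, not merely a ``non-strict'' relationship. Once that is noted, the picked tuple $(x_1, \ldots, x_k)$ is a $\pi$-appearance in $f$, proving the observation.
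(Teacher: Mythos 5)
Your proof is correct and is exactly the direct point-picking argument the paper intends; the paper states this as an observation without proof, treating it as immediate from the definitions of the stripe/layer orderings, which is precisely what you have spelled out. No issues.
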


 The converse of~\cref{obs:homo} is not true;  $G_n$ may contain a $\pi$-appearance while $G_{m,m}$ does not. This happens when some of the boxes that contain the $\pi$-appearance  share a layer or a stripe. Two boxes are \emph{directly-connected} if they share a layer or a
stripe.
The transitive closure of the relation \emph{directly-connected} is called
\emph{connected}. An arrangement of boxes where every two boxes are connected is called a \emph{connected component}, or simply, a component. The size of a connected component is the number of boxes in it.


\begin{figure}
\begin{center}
\includegraphics[scale = 0.7]{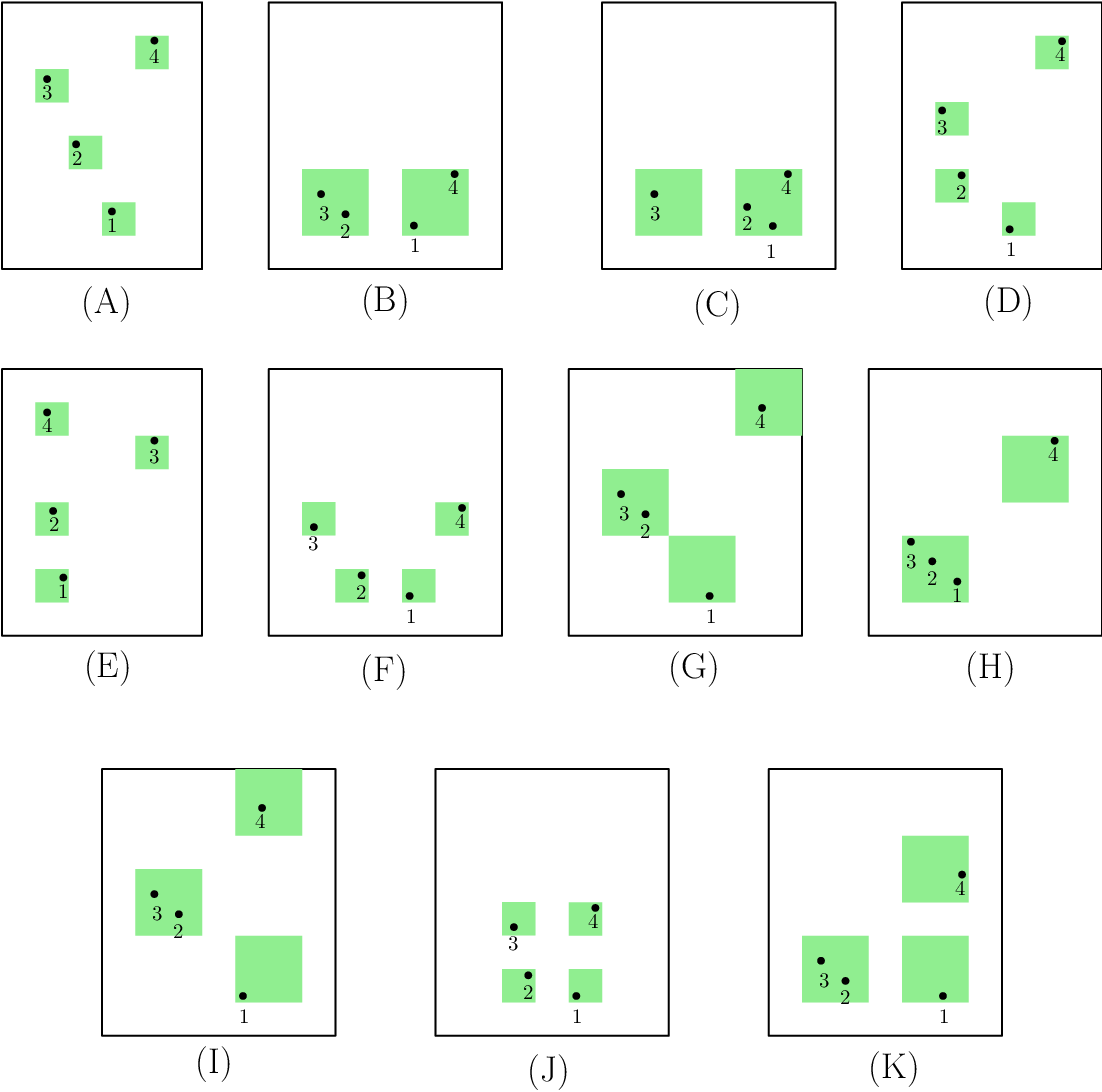}
\end{center}
\caption{Each rectangle  represents a different grid $G_n$, where the
  green shaded boxes correspond to some nonempty  boxes in those
  grids. Each figure represents a different configuration type with
  respect to the appearance of some $4$-length pattern. The dots and
  the numbers indicate possible splittings of the $4$ legs  
of $\pi$.
Figure (E) represents the pattern $(4,2,1,3)$ and all others represent the pattern $(3,2,1,4)$.
The sizes of green boxes in the figures are not representative and are not drawn to scale.}
\label{fig:1}
\end{figure}

For $\pi \in \mathcal S_k$, a $\pi$-appearance in $G_n$ implies that the $k$ points corresponding to such a
$\pi$-appearance are in $i \leq k$ distinct components  in $G_{m,m}$,
where the $j$th component $C_j$ may contain~$b_j$ boxes each
containing at least one point of the corresponding
$\pi$-appearance. 
We refer to the
$\pi$-values in the corresponding boxes of the components as \emph{legs}.  For example, for
$\pi = (3,2,1,4)$, the $\pi$-appearance shown in \cref{fig:1}(B) is
contained in two boxes that  share the same layer, and hence form one
component. The left box contains the
$3,2$ legs of the $\pi$-appearance and other contains the $1,4$ legs.
  A different $1$-component  $2$-boxed 
 appearance in the same two boxes has $3$ appearing in
$B_1$ and all the other legs in $B_2$ as in \cref{fig:1}(C).
It need not be the case that every pair of boxes in a single connected component are directly-connected
as illustrated in~\cref{fig:1}(J) and \cref{fig:1}(K).

  Examples for $\pi=(3,2,1,4)$-appearances with two components
  $C_1,C_2$ are illustrated in \cref{fig:1}(F) and \cref{fig:1}(H). In
  the first, $C_1,C_2$ contain $2$ boxes each, where $C_1$ contains the
  $(3,4)$ legs of the appearance, each in one box, and $C_2$
  contains the $(1,2)$ legs. In the second, each component is
  $1$-boxed, where the first contains the $(3,2,1)$-legs and the other
  contains the $4$-leg of the appearance. 
  \cref{fig:1}(A) contains a $(3,2,1,4)$-appearance in $4$
  components. Some other possible appearances with $1$ component 
 and $3$ components  are illustrated in \cref{fig:1}(B), \cref{fig:1}(C), 
 \cref{fig:1}(D) and \cref{fig:1}(G).

 To sum up, each $\pi$-appearance in
$G_n$ defines an arrangement of nonempty boxes in $G_{m,m}$ that contain the legs of that appearance.  This arrangement is defined by the relative order of the
layers and stripes among the boxes, and has at most $k$
components. 
Such a box-arrangement that can contain the legs of a $\pi$-appearance is called a
\emph{configuration}.
Note that there may be many different
$\pi$-appearances in distinct boxes, all having the same 
  configuration $\mathcal C$. 
  Namely, in which, the arrangements of the boxes
in terms of the relative order of layers and stripes are identical.
So, every set of $\ell \leq k$ points in the $k \times k$ grid defines a configuration and two such sets represent the same configuration
if they are order-isomorphic with respect to the grid order. For instance, the sets of points $\{(1,1), (2,1), (3,3)\}$ and $\{(1,1), (2,1), (3,4)\}$ represent the same configuration.
An actual set of boxes in $G_{m,m}$ forming a specific type of configuration
is referred to as a \emph{copy} of that configuration.

For $\pi \in \mathcal S_k$, let $c(k)$ be the number of all possible configurations that are consistent
with a $\pi$-appearance. 
For any fixed $\pi, $ the number $c(k)$ of
distinct types of configurations is upper bounded
in the following observation.

\begin{observation}
  \label{obs:c7}
  $c(k) = 2^{O(k \log k)}$ 
\end{observation}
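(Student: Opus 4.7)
The plan is to encode each configuration as a map from the $k$ legs of $\pi$ into a small domain, and count the number of possible maps. This reduces the bound on $c(k)$ to an elementary counting problem.

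First, I would normalize: since a configuration only records the \emph{relative} order of the layers and stripes containing its boxes, I may assume the layers involved are relabeled $1, 2, \dots, a$ and the stripes are relabeled $1, 2, \dots, b$, with $a, b \leq k$ (the configuration has at most $k$ boxes, since each of the $k$ legs of the $\pi$-appearance sits in exactly one box, hence at most $k$ distinct layers and $k$ distinct stripes appear). Under this canonical rank-based labeling, each box is a pair $(l,s) \in [k] \times [k]$.

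Next, I would associate to each configuration the function $\phi \colon [k] \to [k] \times [k]$ sending each leg $i$ to the (layer rank, stripe rank) of the box containing it. The configuration is fully determined by $\phi$: the image of $\phi$ is the set of nonempty boxes, and the fibers $\phi^{-1}(l,s)$ tell us which legs lie in each box. Hence $c(k)$ is bounded by the number of such functions, which is $(k^2)^k = k^{2k} = 2^{2k\log k}$, yielding the claimed $c(k) \leq 2^{O(k\log k)}$. In fact, not every $\phi$ corresponds to a valid configuration, since $\phi$ must respect the monotonicity forced by $\pi$ (legs at later positions map to weakly larger stripe ranks, and legs with larger $\pi$-value map to weakly higher layer ranks), so the true count is strictly smaller.

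The main subtlety, which is essentially the only nontrivial step, is ensuring the encoding is faithful: distinct configurations must give rise to distinct $\phi$'s. The canonical rank relabeling handles this, because once we have agreed to label the layers and stripes of the configuration by their rank in $1, \dots, a$ and $1, \dots, b$ respectively, knowing $\phi$ uniquely reconstructs the set of nonempty boxes in use and the leg-assignment into those boxes, which is exactly what a configuration records. I would note in passing that a tighter bound of $2^{O(k)}$ is achievable by observing that legs sharing a stripe must form a contiguous block in position order (and similarly legs sharing a layer form a contiguous block in $\pi$-value order), so each of the stripe-partition and layer-partition has at most $2^{k-1}$ choices; but since the paper only needs $2^{O(k\log k)}$, the simpler function-counting argument above suffices.
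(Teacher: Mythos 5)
Your proposal is correct and takes essentially the same route as the paper: both normalize a configuration to at most $k$ boxes sitting in a $k \times k$ rank grid and then do a one-line count, the paper bounding by the number of subsets of at most $k$ cells, $\sum_{i \in [k]}\binom{k^2}{i}$, and you bounding by the number of leg-to-cell maps, $(k^2)^k$, both of which are $2^{O(k\log k)}$. Your closing aside about a $2^{O(k)}$ bound via contiguity of stripe/layer blocks is a valid sharpening but, as you note, unnecessary here.
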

 \begin{proof}
 The total number of possible configurations is upper bounded by the
 number of ways to select at most $k$ points from a $k \times k$
 grid. This latter quantity is equal to $\sum_{i \in [k]} \binom{k^2}{i}$, which is at most $2^{O(k \log k)}$. 
 \end{proof}

A configuration $\mathcal{C}$ does not fully specify the way in which a $\pi$-appearance can be present. 
It is necessary to also specify the way the $k$ legs of the $\pi$-appearance are partitioned
among the boxes in a copy of $\mathcal{C}$. 
Let $\mathcal{B}$ denote a set of boxes forming the configuration $\mathcal{C}$.
Let $\phi: [k] \to \mathcal{B}$ denote the mapping of the legs of the $\pi$-appearance to boxes in
$\mathcal{B}$, where $\phi(j), j \in [k]$ denotes the box in $\mathcal{B}$ containing the $j$-th leg of the $\pi$-appearance.
We say that the copy of $\mathcal{C}$ formed by the boxes in $\mathcal{B}$ contains a $\phi$-legged $\pi$-appearance.

A configuration $\cC$ in which the boxes form $p \geq 2$
components, and that is consistent with a $\pi$-appearance, defines
$\nu_1, \ldots ,\nu_p$-appearances, respectively, in the $p$ components
of $\cC$, where $\nu_j$ for $j \in [p]$ is the subpermutation of $\pi$ that is defined by
the restriction of $\pi$ to the $j$-th component.  In addition, $\cC$
defines  the corresponding 
mappings $\phi_j, j=1, \ldots p,$ of the corresponding legs of each $\nu_j$ to the corresponding
boxes in the $j$th component. For example,
consider $\pi = (3,2,1,4)$ and the box arrangement shown in \cref{fig:1}(F). That arrangement has two connected
components: one that contains $B_1,B_4$ and the other that contains
$B_2,B_3$, where we number the boxes from left to right (by increasing stripe
order). Further, the (only) consistent partition of the legs of $\pi$ into these
boxes is  $\pi(i) \in B_i, ~ i\in [4]$. In particular, it means
that the component formed by $B_1, B_4$ contains the $3,4$ legs of $\pi$ and the
component formed by $B_2, B_3$ contains the $2,1$ legs of $\pi$.
Thus, in terms of the  discussion above, the component formed by $B_1, B_4$
has a $\nu_1 = (1,2)$-appearance (corresponding to the $3,4$ legs of
$\pi$), with leg mapping $\phi_1$ mapping the $1$-leg  into $B_1$ and
the $2$-leg  into 
$B_4$. Similarly,  the component formed by $B_2, B_3$ has a $\nu_2=(2,1)$-appearance
(corresponding to the $2,1$ legs of $\pi$) with corresponding leg
mapping~$\phi_2$ that maps the $2$-leg into $B_2$ and the $1$-leg into $B_3$. Note
that the converse is also true:  every $\nu_1$-appearance in
the component $B_1 \cup B_4$, with a leg-mapping $\phi_1$ (that is, in which the $1,2$ legs are in
$B_1,B_4$ respectively),  in addition to a $\nu_2$-appearance in
$B_2 \cup B_3$ with the leg-mapping $\phi_2$,  results in a
$\pi$-appearance in $G_{m,m}$.  

This leads to the crucial observation that if $\pi$ defines the corresponding
$\nu_1, \ldots , \nu_p$ appearances in the $p$ components of the
configuration $\cC$,  then, \emph{any} $\nu_1, \ldots ,\nu_p$-appearances
in the $p$ components of any copy of $\cC$ with consistent
leg-mappings is  a $\pi$-appearance in
$\cC$. This is formally stated below.

\begin{definition}
  \label{def:restricted_appearance}
  Let $\nu \in \mathcal S_r$. 
  Let $B_1, \ldots ,B_p$ 
  be a set of boxes forming
  {\rm one} component $C$
  and  $\phi : [r] \mapsto \{B_1, \ldots, B_p\}$
  be an arbitrary mapping of the legs of a $\nu$-appearance
  to boxes.  We say that $C$ has a $\phi$-legged $\nu$-appearance
  if there is a
  $\nu$-appearance in $\bigcup_{j = 1}^p B_j$ in which for each $i \in [r]$,
  the $i$-th leg of $\nu$ appears in the box $B_{\phi(i)}$. 
\end{definition}


\begin{observation}
  \label{obs:3}
  Let $\pi \in \mathcal{S}_k$ and assume that there exists a
  $\pi$-appearance in $G_n$ that, in the grid of boxes $G_{m,m}$, forms a configuration $\cC$ that contains 
  $t$ components $\mathcal{C}_1, \ldots ,\cC_t$.
  Let  the restriction of this
  $\pi$-appearance to $\cC_1, \ldots ,\cC_t$ define the permutation patterns $\nu_1, \ldots \nu_t$
  with leg mappings $\phi_1, \ldots \phi_t$, respectively.
  Then any collection $\{C'_j: j \in [t]\}$ such that $C'_j$ is a configuration copy of $\cC_j$ and $\bigcup_{j=1}^{t} C'_j$ is a copy of $\cC$, along with 
  $\phi_j$-legged $\nu_j$-appearances in $C'_j$ for each $j \in [t]$  
 defines a
  $\pi$-appearance in $\bigcup_{j=1}^{t} C'_j$. 
\end{observation}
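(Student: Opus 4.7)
The plan is to verify that the $k$ points obtained by selecting one representative per leg from the hypothesized $\phi_j$-legged $\nu_j$-appearances in each component $C_j$ together form a $\pi$-appearance in $G_n$. It suffices to check that every pair of these $k$ chosen points has the correct relative order in both the index and the value coordinates, where correctness is dictated by $\pi$.

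I would fix an arbitrary pair of legs $s, t \in [k]$ of $\pi$ and split into two cases. \textbf{Case 1:} $s$ and $t$ fall in the same component $C_j$. Then they are both legs of the $\nu_j$-appearance in $C_j$, and since $\nu_j$ was defined as the restriction of $\pi$ to the legs placed in $C_j$, the relative order dictated by $\nu_j$ on these two legs matches the relative order dictated by $\pi$. Hence any $\phi_j$-legged $\nu_j$-appearance places them with the desired order in both coordinates. \textbf{Case 2:} $s$ lies in component $C_j$ and $t$ lies in a distinct component $C_{j'}$. Then the boxes $\phi_j(s)$ and $\phi_{j'}(t)$ lie in distinct components of the configuration $\mathcal{C}$, so by \cref{def:basic-box-tuple} they share neither a layer nor a stripe. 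Consequently their stripes are totally ordered and their layers are totally ordered, and any copy of $\mathcal{C}$ must preserve this pairwise ordering between the corresponding boxes across components. Since any point of a box $B$ inherits the layer and stripe orderings of $B$ whenever it is compared to a point of a box in a different component, the relative order of the newly chosen representatives of $s$ and $t$ agrees with the order that the original $s$-th and $t$-th legs of the $\pi$-appearance had, which by assumption agrees with $\pi$.

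Combining both cases, every pairwise relative order among the $k$ new legs matches $\pi$, so the $k$ points form a $\pi$-appearance contained in $\cup_{j=1}^{t} C_j$, which is the claim. I do not anticipate a serious obstacle: the only mildly delicate point is articulating what is invariant across copies of a configuration, namely that the relation $B < B'$ among boxes of distinct components is an intrinsic feature of $\mathcal{C}$ and hence preserved in any copy, while the within-component order is instead guaranteed by the sub-permutation $\nu_j$ together with the leg mapping $\phi_j$. Once this decomposition is stated cleanly, the verification reduces to the elementary pairwise check described above.
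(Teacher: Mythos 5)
Your proof is correct and follows the same reasoning the paper relies on: the paper states this observation without a formal proof (it is treated as immediate from the preceding discussion of the $B_1,B_4$ / $B_2,B_3$ example), and that discussion is precisely your two-case pairwise check — within-component orders are enforced by the $\phi_j$-legged $\nu_j$-appearances because $\nu_j$ is the restriction of $\pi$, and cross-component orders are forced by the box ordering since boxes in distinct components share neither a layer nor a stripe. Your explicit note that the cross-component box order is an intrinsic feature of $\cC$ preserved in any copy is exactly the point the paper leaves implicit.
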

\begin{proof}
Since $\bigcup_{j \in [t]} C'_j$ form a copy of the configuration $\cC$, for any two boxes $B_1$ and $B_2$ belonging to $\bigcup_{j \in [t]} C'_j$, their relative position in the grid $G_{m,m}$ is identical to the relative position of the corresponding boxes in $\cC$. 
For $a,b \in [k]$ such that $a < b$, consider the $a$-th and $b$-th leg in the order from left to right along the grid, in the union of $\phi_j$-legged $\nu_j$-appearances in $C'_j$ for $j \in [t]$.
By the above statement and by virtue of the leg mappings $\phi_j, j \in [t]$, the relative values of the $a$-th and $b$-th legs in the aforementioned union of appearances is identical to the relative values of the $a$-th and $b$-th legs in the $\pi$-appearance occurring according to the configuration $\cC$. Therefore, the union of $\phi_j$-legged $\nu_j$-appearances in $C'_j$ for $j \in [t]$ defines a $\pi$-appearance.
\end{proof}

\subsection{Erasure-resilient testing}

Erasure-resilient (ER) testing, introduced by Dixit, Raskhodnikova, Thakurta and Varma~\cite{DixitRTV18}, is a generalization of property testing. 
In this model, algorithms get oracle access to functions for which the values of at most $\alpha$ fraction of the points in the domain are erased by an adversary, for $\alpha \in [0,1)$. 

For $f:[n]\to \R$ let $\NE(f)$ be the
\emph{nonerased} values of $f$.
The parameter $\alpha$ is given as an input to the algorithms, but, they do not know $\NE(f)$.
On querying a point, the algorithm receives the function value if the point is \emph{nonerased}, and a special symbol otherwise.

\begin{definition}[One-sided error erasure-resilient tester for ${\mathcal P}_\pi$]~\label{def:er-testing}
For $\eps \in (0,1), \alpha \in [0,1),$
 an $\alpha$-erasure-resilient ($\alpha$-ER) $\eps$-tester for
$\mathcal{P}_\pi$ is a randomized algorithm that on oracle access to a function $f:[n] \to \mathbb{R}$, 
\textbf{accepts}, with probability $1$, if $f|_{\NE(f)}$ is $\pi$-free, and 
 \textbf{rejects}, with probability at least $2/3$, if there is a
 matching of size $\eps n/k$ of $\pi$-appearances in $\NE(f)$.
\end{definition}    

We point out that the definition in~\cite{DixitRTV18} is for any
property and for two-sided error testing as well.

Dixit et al.~\cite{DixitRTV18} give a one-sided error $\alpha$-ER $\eps$-tester for
monotonicity of functions $f:[n] \to \mathbb{R}$ with query complexity
$O(\frac{\log n}{\eps})$ that works for any constants
$\alpha, \eps \in [0,1)$. 
It can be observed that the $\polylog n$-query
one-sided error tester for $\nu$-freeness of~\cite{NewmanRRS19}, for any
 $\nu \in \mathcal{S}_3$, is also ER.

As part of our
algorithm for testing $\pi$-freeness for  $\pi \in \mathcal{S}_k$
for $k  \geq 4$, we call testers for smaller subpatterns on subregions of the grid $G_n$ which may be defined by, say, $\bx(S,I)$ for some $S \subseteq [n], I \subseteq R(f)$. 
In this case, the only access to points in $\bx(S,I)$ is by sampling indices from $S$ and checking whether their values fall in $I$. 
If the values do not fall in $I$, we can treat them as erasures. 
Given the promise that the number of points falling in $\bx(S,I)$ is a constant fraction of $|S|$, we can 
simply run ER testers on $f|_S$ to test for these smaller subpatterns. 

\section{High-level description of the basic algorithm for
  \texorpdfstring{$\pi \in {\mathcal S}_4$}{testing freeness of a pattern of size 4}}\label{sec:top-level}

In this section, we give a high-level description of most of the ideas used in the design of our $\pi$-freeness tester of query complexity
 $\tilde{O}(n^{o(1)})$. 
We first describe the ideas behind a $\tilde{O}(\sqrt{n})$-query $\eps$-tester for $\pi$-freeness of functions $f:[n] \to \R$, where $\pi \in \mathcal{S}_4$ and $\eps \in (0,1)$.
At the end of this section, we briefly touch upon how to generalize
these ideas to obtain the query complexity of $\tilde{O}(n^{o(1)})$
for constant-length permutations of length at least $4$. For simplicity, we assume in what follows that the input function
$f: [n] \to \R$ is one-to-one.  The algorithm for functions that are not
one-to-one  differs in a few places and these are explained in~\cref{sec:proof-correct1}.

For the purposes of this high-level description, we fix the forbidden
permutation  $\pi
= (3,2,1,4)$. The same algorithm works for any
$\pi \in \mathcal S_4$.
We view $f$ as an
(implicitly given) $n \times |R(f)|$ grid $G_n$ consisting of points
$(i,f(i))$ for $i \in [n]$, where, in
particular, $R(f)$ is neither known nor bounded.
Our first goal is to approximate $G_n$ by a coarse grid of boxes
$G_{m',m'}$, where $m= \sqrt{n}$ and $m' = \Theta(m)$.
This is done by first querying $f$ on $\tilde{\Theta}(m)$ independently sampled and uniformly random indices,
upon which we obtain a partition $\mathcal L$ of $R(f)$ into $m'$ horizontal layers, corresponding to value intervals $\{I_j\}_{j \in [m']}$. 
 Then,
we partition
the index set $[n]$ into
$m'$ contiguous intervals $\{S_i\}_{i\in [m']}$ of equal sizes. 
This results in
a grid $G_{m',m'}$, 
where a box $\bx(S_i,I_j),i,j \in [m']$ is tagged as \emph{nonempty} if it has at least one sampled point. A box is tagged as \emph{dense} if it
contains $\Omega_{\eps}(1)$-fraction of the sampled points in its stripe.
All of the above takes $\tilde{O}_{\eps}(m)=\tilde{O}_{\eps}(\sqrt{n})$ queries.
The following properties are satisfied with high probability:
\begin{itemize}
\item Each layer, that is $\bx([n],I_j), ~ j \in [m'],$ has
  approximately the same number of points from $G_n$.
  \item It is either the case that the dense boxes contain all but an insignificant fraction of
    the points in $G_n$, or the total number of nonempty boxes is larger
    than $m' \log n$.
\end{itemize}

Next, we use the following lemma of Marcus and Tardos.
\begin{lemma}[\cite{MarcusT04}]\label{clm:MarcusTardos}
For any $\pi \in \mathcal S_k,~ k \in \N$, there is a constant $\kappa(k) \in \N$ such that for any $r \in \N$, if a grid $G_{r,r}$
contains at least $\kappa(k) \cdot  r$ marked points, then it contains a
$\pi$-appearance among the marked points. 
\end{lemma}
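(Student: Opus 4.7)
The plan is to prove the equivalent statement about 0-1 matrices: for any $\pi \in \mathcal{S}_k$, the maximum number $f(r)$ of 1-entries in an $r \times r$ matrix $M$ whose 1-entries do not contain the permutation pattern $P_\pi$ satisfies $f(r) \leq \kappa(k) \cdot r$. Since a marked point in $G_{r,r}$ corresponds naturally to a 1-entry, and a $\pi$-appearance among marked points to a $P_\pi$ sub-pattern, this reformulation is equivalent. I would then proceed by induction on $r$ using a block-decomposition argument.

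Partition $M$ into $\lceil r/k^2 \rceil^2$ disjoint blocks of size at most $k^2 \times k^2$. Associate to this partition a ``reduced'' matrix $M^*$ of size $\lceil r/k^2 \rceil \times \lceil r/k^2 \rceil$, where $M^*_{ij} = 1$ iff the $(i,j)$-block of $M$ contains at least one 1-entry. The key observation is that $M^*$ is itself $\pi$-free: if $M^*$ contained a $\pi$-pattern on blocks $B_1, \ldots, B_k$, then selecting one 1-entry from each $B_j$ would yield a $\pi$-pattern in $M$, contradicting the hypothesis. Hence the number of nonempty blocks is at most $f(\lceil r/k^2 \rceil)$, which unlocks the recursion.

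Classify each block as \emph{wide} if its 1-entries occupy at least $k$ distinct columns, \emph{tall} if they occupy at least $k$ distinct rows, or otherwise \emph{small}, in which case it contains at most $(k-1)^2$ 1-entries. The essential combinatorial ingredient is a bound on the number of wide blocks within any single block-row (and, symmetrically, tall blocks within any block-column). Concretely, I would argue that if any block-row contained more than $(k-1)\binom{k^2}{k}$ wide blocks, then by pigeonhole on the ``column-profiles'' (a choice of $k$ columns inside the block where entries occur), some $k$ wide blocks in the same block-row would share a profile; a careful selection argument across these blocks then embeds $\pi$ into the block-row, contradicting the assumption. Combined with the symmetric bound for tall blocks, the total contribution of wide and tall blocks is at most $O_k(r)$, while small blocks contribute at most $(k-1)^2$ per nonempty block. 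This yields the recurrence
\begin{equation*}
f(r) \;\leq\; (k-1)^2 \cdot f\!\left(\lceil r/k^2 \rceil\right) \;+\; O_k(r),
\end{equation*}
whose solution, together with the trivial base case $f(r) \leq r^2$ for $r \leq k^2$, is $f(r) \leq \kappa(k) \cdot r$ for an appropriate constant $\kappa(k)$ depending only on $k$.

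The main obstacle I expect is justifying the bound on wide blocks per block-row: showing that many wide blocks sharing a common column-profile really force the specific pattern $\pi$ (rather than merely some length-$k$ pattern). This requires carefully exploiting both the freedom to choose which 1-entry to pick within each (block, column) slot and the fact that there are many such blocks, so that the resulting $k$-point selection can be made to realize the row-order prescribed by $\pi$. Once this combinatorial core is in place, assembling the recurrence and solving it are routine.
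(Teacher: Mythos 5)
The paper does not prove this lemma at all: it is the Marcus--Tardos theorem (the resolution of the F\"uredi--Hajnal conjecture), imported from \cite{MarcusT04} as a black box. What you have written is a reconstruction of the original Marcus--Tardos argument, and its architecture is the right one: contraction to a reduced matrix that is itself $\pi$-free, the wide/tall/small trichotomy, the recurrence $f(r) \leq (k-1)^2 f(\lceil r/k^2 \rceil) + O_k(r)$, and the linear solution.

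However, the one step you yourself flag as the combinatorial core is stated with rows and columns transposed, and in the form you give it is false. You claim a block-\emph{row} contains at most $(k-1)\binom{k^2}{k}$ wide blocks. Counterexample: let $M$ be the $r \times r$ matrix whose first row is all ones and which is zero elsewhere. It contains no permutation pattern of length $\geq 2$, yet every block in the first block-row is wide (its ones occupy all $k^2$ of its columns), so a single block-row can contain $r/k^2$ wide blocks. The claim fails because $k$ wide blocks lying side by side in one block-row give no control over the \emph{rows} of the selected entries: all of them draw their rows from the same set of $k^2$ rows, so you cannot force the row order prescribed by $\pi$. The correct claim bounds wide blocks per block-\emph{column} (and, symmetrically, tall blocks per block-row): if a block-column contained more than $(k-1)\binom{k^2}{k}$ wide blocks, pigeonhole on the $k$-subsets of the $k^2$ columns spanned by that block-column yields $k$ wide blocks, stacked vertically in distinct block-rows, all containing ones in the same columns $c_1 < \dots < c_k$. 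The embedding is then immediate rather than a ``careful selection argument'': for the leg of $\pi$ in column position $j$, pick a one in column $c_j$ of the $\pi(j)$-th of these blocks ordered by height; the columns are correct by construction, and the row order is forced by the vertical order of the blocks, since distinct block-rows have disjoint, ordered row ranges. With the orientation fixed, the rest of your outline (at most $k^4$ ones per wide or tall block, at most $(k-1)^2$ per small block, the recurrence and its solution) goes through and recovers $\kappa(k) = O\bigl(k^4 \binom{k^2}{k}\bigr)$.
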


Let $\kappa= \kappa(4)$. Using~\cref{clm:MarcusTardos}, we may assume that there are at most $\kappa \cdot m'$ nonempty boxes in $G_{m',m'}$, as otherwise, we already would have found a
$\pi$-appearance in $G_{m',m'},$ which by~\cref{obs:homo}, implies a
$\pi$-appearance in $G_n$ and in $f$ as well. Hence, as a result
of the gridding, if we do not see a $\pi$-appearance among the
sampled points, the second item above implies that there are
$\Theta(m')$ dense boxes in $G_{m',m'}$ and that these boxes cover all
but an insignificant  
fraction of the points of $G_n$. 

An averaging argument implies that, for an appropriate value $d = d(\eps)$, only a small fraction (depending on $\eps$) of layers (or stripes)
contain more than $d$ nonempty boxes. Therefore, since the grid $G_n$ is $\eps$-far from being $\pi$-free,
the restriction of $G_n$ to the layers and stripes that  contain at most $d$ boxes each, is also $\eps'$-far from $\pi$-free for
a large enough $\eps' < \eps$. This implies that $G_n$
restricted to the points in dense boxes that belong to layers and stripes
containing at most $d$ dense boxes each, has a matching $M$ of
$\pi$-appearances of size at least $\eps' n/4$. We assume in what
follows that this is indeed the situation. 

An important note at this point is that every dense box
$B$ is contained in $O(d^3)$ many copies of
$1$-component configurations with at most $4$ dense boxes.
This implies that there are $O(d^3m)$ such copies of $1$-component configurations in $G_{m',m'}$.

Recall that every $\pi$-appearance in $M$ defines a configuration
of at most $4$ components in $G_{m',m'}$. 
Hence,  the matching $M$ of size $|M| = \Omega_{\eps}(n)$
 can be partitioned into $4$ sub-matchings $M = M_1 \cup M_2 \cup M_3
 \cup M_4$, where $M_i, ~i =1, \ldots ,4$ consists of the $\pi$-appearances participating in
 configurations having exactly $i$ components. Since $|M| = \Omega_{\eps}(n)$
 it follows that at least
 one of $M_i, ~ i=1,2,3,4$ is of linear size. Now, any $\pi$-appearance in $M_4$ is an appearance in $4$ distinct dense boxes in
 $G_{m',m'}$, where no two share a layer or a stripe. In that case, such
 an appearance can be directly detected from the tagged $G_{m',m'}$ with no further queries.

The description of the rest of the algorithm  can be viewed as a treatment of several
independent cases regarding which one among the constantly many configuration types
contributes the larger mass out of the $\Omega_\eps(n)$
$\pi$-appearances in $M_1 \cup M_2 \cup M_3$. There are only two
significant cases, but to
enhance understanding, we split these two cases into the more natural larger
number of cases, and observe at the end  that most cases can
 be
treated conceptually  in the same way.

\paragraph{Case 1:} Let $|M_1| \geq \eps' n/3$, and let a constant fraction of the $\pi$-appearances in $M_1$
be in a single-box component. Then, on average,  a dense box,
out of the $\Theta(m')$ dense boxes, is expected to contain  at least
$\Theta_{\eps}(n/m') = \Theta_{\eps}(m') = \Theta_{\eps}(\sqrt{n})$ many $\pi$-appearances. Thus a random dense box $B$ is likely to have 
 $\Theta_{\eps}(\sqrt{n})$ many $\pi$-appearances, and hence,  making queries to
all points of such a box will enable us to find one such $\pi$-appearance. This takes
 an additional $n/m' = \Theta(\sqrt{n})$ queries, which is within the
 query budget.

Next, consider the case that a constant fraction of the
$\pi$-appearances in $M_1$ belong to a configuration $\cC$ that has more than
one dense box (but only one connected component). An example of such a situation would be~\cref{fig:1}(J). By a similar argument, a random dense
box is expected to participate in at
least $\Theta_{\eps}(n/m)$ many $\pi$-appearances of copies of configuration-type~$\cC$.
Since each
dense box is part of at most $O(d^3)$ (constantly many) connected components of at most
$4$ dense boxes,  sampling a random dense box $B$ and querying all the indices in each of
the components that contain at most $4$ dense boxes and involve~$B$, is likely to find a $\pi$-appearance with
high probability. 
Each connected component is over at most $4n/m'$ indices, resulting in
$O(n/m)$ queries.

\paragraph{Case 2:}
$|M_3| \geq \eps' n/3$, and 
assume first that a constant
fraction of the members in $M_3$ belong to copies of a
configuration $\cC$ of $3$ components $B_1,B_2,B_3$, 
where each one is a single box. Since the boxes $B_1,B_2,B_3$ belong to different components, no two of them share a layer or a stripe. For our current working example,  $\pi
= (3,2,1,4)$, assume further that $B_1$ contains the $3,2$ legs of a
$\pi$-appearance and $B_2,B_3$ contain its 
$1$ and $4$ legs, respectively (see~\cref{fig:1}(G) for an example). In this case $B_1$ is not
$(2,1)$-free (as $B_1$ contains the $(3,2)$-subpattern of $\pi$). 

By an averaging argument, it follows that there is a dense box $B$
for which: (a) $B$ is far from $(2,1)$-free, and (b) there are
corresponding dense boxes $B_2,B_3$ that, together with $B$, form a copy of the
configuration $\cC$. Now, a test follows
 easily. We test every dense box for $(2,1)$-freeness, which can be
 done in $O(\log n)$ queries per box, and hence in $\tilde{O}(m)$ in
 total. Then, by the guarantee above we will find the corresponding
 $B,B_2$ and $B_3$ and a $\pi$-appearance in it (by~\cref{obs:3} with the trivial
 mapping).

  A similar argument holds for a $3$-component configuration $\cC'$ in
  which one component contains more than one box.
  Let $\cC'$ consist of two single-box components and a two-boxed component, as in~\cref{fig:1}(D). 
In this case, a similar averaging argument shows the existence of a
dense box $B$ for which (a) there is a dense box $B'$ forming a
component $D$ with $B$, and dense boxes $B_2,B_3$ such that $D,B_2,B_3$ jointly form a copy of $\cC'$, and (b)
there are $\Omega_{\eps}(n/m) = \Omega_{\eps}(\sqrt{n})$  $\phi$-legged $(2,1)$-appearances in
$D$, where $\phi$ is such that the $2$-leg maps to the upper box in $D$ and the $1$-leg maps to the lower box in $D$. 
Hence, the test is similar to the simpler
case above. We test for every dense box $B$ and every way to extend it
into a component of two boxes by adding a box $B'$ (a constant number
of ways) such that $D = (B,B')$ contains a $\phi$-legged  $(2,1)$-appearance.  This
again can be done using $O(\log n)$ queries per component copy $D$. Once this is done,
finding $D,B_2,B_3$ that form a copy of $\cC'$ results in a
$\pi$-appearance by~\cref{obs:3}.


\paragraph{Case 3:}
Assume now that $|M_2| \geq \eps' n/3$, and  that the
corresponding 
configurations of the $\pi$-appearances in $M_2$ contain two single-box components $B_1,B_2$,
where $B_1$ holds the first $3$ legs of $\pi$ and $B_2$ holds the
$4$-th leg. E.g., For  $\pi=(3,2,1,4)$, the configuration $\cC$
contains two boxes $B_1,B_2$ where $B_1$
contains the subpattern $(3,2,1)$ and $B_2$ is any nonempty box
such that $B_1 < B_2$, (see~\cref{fig:1}(H) for an illustration).
 An averaging argument, as made in Case 2,  shows that there is
a
dense box $B_1$ for which (a) $B_1$ is far from $(3,2,1)$-free, and (b) there is
a corresponding dense box $B_2$ that, together with $B_1$, forms a copy of the
configuration $\cC$.
    This suggests a test that is 
    conceptually similar to the test in Cases 1 and 2. We test each
    box for being $(3,2,1)$-free. This can be done in $O(\polylog n)$
    queries (e.g., \cite{Ben-EliezerCLW19}). Then once finding a $(3,2,1)$ in $B_1$ for
which (a) and (b) hold, $B_1 \cup B_2$ contains a $\pi$-appearance.

We note here
that for the example above, we ended by testing for $(3,2,1)$-freeness
which is relatively easy. For a different configuration or $\pi$, we might need
to test  $B_1$ for a different $\nu \in \mathcal S_3$, but this can be
done for any $\nu \in \mathcal{S}_3$ using $O(\polylog
n)$ queries~\cite{NewmanRRS19}. Hence the same argument and complexity guarantee
hold for any $2$-component configuration $\cC$ as above.

\paragraph{Case 4:}
A  more complicated situation  arises when  $|M_2| \geq \eps' n/3$, and the
corresponding 
configurations of the $\pi$-appearances in $M_2$ are formed of two
components $D,B$, with $D$ holding $3$ legs of $\pi$ in  $2$ or $3$
boxes (rather than in one box as in Case 3). E.g., 
$\pi=(4,2,1,3)$, and the configuration $\cC$ as illustrated  in~\cref{fig:1}(E).

By a similar averaging argument to that made in Case 2, it follows that
there is a 
dense box~$B_1$ for which (a) there are dense boxes $B_2,B_3$ forming
a copy  $D'$ of $D$ with $B_1$, and a dense box~$B$ such that the
configuration formed by $D',B$ is a copy of $\cC$, and (b)
there are $\Omega_{\eps}(n/m) = \Omega_{\eps}(\sqrt{n})$  $\phi$-legged $(3,2,1)$-appearances in
$D'$, where $\phi$ is consistent with the leg mapping that is induced by
the  configuration $\cC$.  This implies a conceptually similar test to
that of 
the simpler Case 3 above -  we test each of the $O(m)$ components $D$
for $(3,2,1)$-freeness, and then with the existence of the
corresponding box $B$ we find a $\pi$-appearance.  However, this is
not perfectly accurate:  the algorithm for finding $\nu=(3,2,1)$ in $D'$, although 
efficient, might find a $(3,2,1)$-appearance where the $3$ legs appear
in $B_1$ or in $B_1 \cup B_2$.  But this does not extend with $B$ to form
a $\pi$-appearance, as the leg mapping is not consistent with 
the one that is induced by $\cC$.   Namely, unlike before, we do not only need
to find a $\nu$-appearance in $D$ but rather a $\phi$-legged 
$\nu$-appearance  with respect to a fixed mapping $\phi$ (that in this case maps
each leg to a different box in the component $D'$).

There are several ways to cope with this extra restriction.
 For the current description of a basic $\tilde{O}(\sqrt{n})$ algorithm, it is enough to sample a
constant number of copies of the component $D$ and do the test for
$\phi$-legged $\nu$-appearance in each. But, since each copy $D'$ is of size 
$O(\sqrt{n})$ we can afford to query all indices in the domain of
$D'$.

To resolve the problem in the general setting, we need to efficiently detect $\phi$-legged $\nu$-appearances
in multi-boxed components. This, however, we currently do not know how to do. Instead, we design a test that
either finds a $\phi$-legged $\nu$-appearance, or finds the
original $\pi$-appearance. This is done  using the algorithm
$\mathsf{AlgTest}_{\pi}(\nu,\phi,D, m,\eps)$ that will be described in~\cref{sec:generalizedtesting}.

\paragraph{Case 5:}
 The last case that we did not consider yet is when most of the
 $\pi$-appearances are in a configuration containing more than one
 component, with at least two components containing two (or more) legs
 each.  For $\pi \in \mathcal S_4$ the only such case is when the
 configuration $\cC$ contains exactly two components, each containing
 exactly two legs of $\pi$.   Returning to our working example with
 $\pi=(3,2,1,4),$ such an example is depicted in~\cref{fig:1}(F).
For the explanation below, we will discuss the case that the 
configuration $\cC$ is as in~\cref{fig:1}(F). Namely, it contains
components $D_1$ that is above $D_2$, with two boxes each $D_1 = \{B_1,B_4\}$ and  $D_2
= \{B_2,B_3\}$, and so that every box contains exactly one leg of
$\pi$ (boxes are numbered by order from left to right in $G_{m',m'}$).
Our goal is to find two
copies $D_1',D_2'$ of the components $D_1,D_2$ respectively,  that form a copy of $\cC$, and to find a
$\phi_1$-legged appearance of $(1,2)$ in $D_1'$, and a $\phi_2$-legged appearance of
$(2,1)$ in $D_2'$, so that
these two appearances will together form a $\pi$-appearance.

Indeed, an averaging argument shows that there are $D_1',D_2'$ as above,
with $D_i'$ containing $\Omega_{\eps}(n/m)$ $\phi_i$-legged appearances of
$\nu_i$ for $i=1,2$.  However, 
 we do
not know whether sampling a pair  $D_1',D_2'$ in some way, will
result in such a good pair. Rather, we are only assured of the existence
of only one such pair!  Hence, in this case we need to test {\em every}
component copy $D'$ of the appropriate type, for every $\nu \in \mathcal S_2$, and for every leg
mapping $\phi$, for a $\phi$-legged $\nu$-appearance in $D'$ in order
to find such an asserted pair of components.  Such
restricted $\nu$-appearances can be tested in $O(\log n)$ queries per
component. Since the number of two-boxed component copies where both boxes belong to the same layer is $O(m)$, this step takes $\tilde{O}(m)$ queries in total.

The same argument holds for any $\pi \in \mathcal S_4,$ and for every 
configuration that is consistent with Case 5.

\paragraph{Concluding remarks}
\begin{itemize}

\item At some places in the algorithm above, we had to
test for $\nu$-appearances (or restricted $\nu$-appearances) in
`dense' subgrids of $G_n$. For this, we need all
our algorithms to be ER, which will be implicitly clear from the
description.  We also need to take care of reducing the total error
when we run a non-constant number of tests, or want to guarantee a
large success probability for a large number of events - this is done
by a trivial amplification that results in a multiplicative $\polylog
n$ factor.

\item  In Case 1, we reduced the problem of finding a
 $\pi$-appearance in $G_n$ that is assumed to be $\eps$-far from
 $\pi$-free, to the same problem on a subrange of the indices (formed
 by a small component) of size
 $\Theta(n/m)$ (with a smaller but constant distance parameter $\eps' <
 \eps$). For the setting of $m=\sqrt{n},$ solving the problem on
 the reduced domain was trivially done by  querying all
 indices in the subrange. In the general algorithm, where our goal is a query complexity of $n^{o(1)}$, we set $m= n^{\delta}$ for an appropriately small
 $\delta$ and apply self-recursion in Case 1.

\item In Case 5, we had to test for $\nu$-freeness (or
for restricted $\pi$-appearances) for $\nu \in \mathcal S_2$ for {\em every}
small component of size $\Theta(n/m)$ in $G_{m',m'}$.  This entails a collection of
$O(m)$ tests, where we want to assign a large success probability to each one of them. 
We also need to guarantee a large success probability to correctly tagging
each of the $\Theta(m^2)$ boxes as part of the layering procedure. 
A similar need will also arise in the general
algorithm.  We amplify the success probability by multiplying our
number of queries by $\log^2 n$ which will imply less than $1/n^{\Omega(\log n)}$
failure probability for each individual event in such collection.  We
will not comment more on this point, and assume implicitly that in all
such places, all needed events occur w.h.p.  


\item In Cases 2, 3, 4 we end up testing $\nu$-freeness for
 $\nu \in \mathcal S_2 \cup \mathcal S_3$ in dense boxes, or
 $\phi$-legged $\nu$-freeness of such $\nu$ in components of multiple
 dense boxes. An averaging argument shows that this can simply be done by sampling one box or
 component, and making queries to all indices therein. 

 Case 5 is different: here, sampling a small number of components does
 not guarantee an expected large number of the corresponding
 appearances. This is the reason that we
 need to test {\em all} components with at most $2$ dense boxes, for
 $\phi$-legged $\nu$-freeness, and for every $\nu \in \mathcal{S}_2$ and leg mapping $\phi$. Algorithm
 $\mathsf{AlgTest}_{\pi}(\nu,\phi,D, m,\eps)$ can do this for any
 $\nu \in \mathcal S_2 \cup \mathcal S_3$ in $n^{\delta}$ queries for
 an arbitrarily small constant $\delta$. Since we have to do it in Case
 5, we may do the same in cases 2, 3, 4 as well! As a result, the
 algorithm above will contain only two cases: Case 1 where we reduce
 the problem to the same problem but on a smaller domain, and the
 new Case 2 where we test {\em every} small component for
 $\phi$-legged $\nu$-appearance for every
 $\nu \in \mathcal S_2 \cup \mathcal S_3$ and every leg mapping $\phi$ -- namely a
 case in which we reduce the problem to testing (restricted
 appearances) for smaller patterns.

\item In view of the comment above, the idea behind improving the
complexity to $n^{\delta}$ for constant $0< \delta <
1$ is obvious:  Choosing $m= n^{\delta/2}$ will result in an $m \times m$
grid, where Layering can be done in
$\tilde{O}(n^{\delta/2})$ queries. Then, Case 2 will be done in  an additional $n^{\delta}$
queries by 
setting a query complexity for $\mathsf{AlgTest}_{\pi}(\nu,\phi,D, m,\eps)$
to be $n^{\delta/2}$ per component. The self-recursion in Case 1 will result in the same
problem over a range of $n/m$. For the fixed $m=n^{\delta/2},$ this will result in
a recursion depth of $2/\delta$, after which the domain size will drop
down to $m$ and allow making queries to all corresponding
indices. This results in a total of $\tilde{O}(n^\delta)$
queries, including the amplification needed to account for the
accumulation of errors and deterioration of the distance parameter at
lower recursion levels.

\item \textbf{Generalized testing and testing beyond $k=4$}. Applying the same ideas to $\pi \in
\mathcal S_k,~ k\geq 5$ works essentially the same way,  provided we can
test for $\phi$-legged $\nu$-freeness of $\nu \in \mathcal S_r$ for $r
< k$. This we know how to do for $\nu \in \mathcal S_2$ but not beyond.
For $r=2$, testing $\phi$-legged $\nu$-freeness of
$\nu \in \mathcal S_2$ is simpler than testing monotonicity for
nontrivial $\phi$, and is equivalent to testing monotonicity when
testing is done in a one-boxed component.  Hence, this can be done in
$O(\log n)$ queries. 
For $r \geq 3$ the exact
complexity is currently not know.

 In particular, one difficulty is that after gridding, a
superlinear number of nonempty boxes does not guarantee such
appearance, 
as Lemma~\ref{clm:MarcusTardos} does not apply. For example, for even~$r$,
consider the grid $[r] \times [r]$ all of whose points in the top left quarter $\{1,\dots,r/2\}\times \{r/2 + 1,\linebreak[4]\dots,r\}$ and right bottom quarter $\{r/2 + 1,\dots,r\} \times \{1,\dots,r/2\}$ are marked. There are no restricted $(1,2)$-appearances among the marked points where the $1$ leg is from the right half and the $2$-leg is from the left half, despite there being $\Omega(r^2)$ points.
However,  for our
goal of testing $\pi$-freeness for $\pi \in \mathcal S_k,$  we can relax the task of finding
$\phi$-legged $\nu$-freeness of $\nu \in \mathcal S_r, r \leq k$ to the following
problem which we call ``generalized-testing $\nu$ w.r.t.\ $\pi$'', denoted $\mathsf{GeneralizedTesting}_{\pi}(\nu, D, \phi)$:  The inputs are a permutation $\nu \in \mathcal{S}_r$,
a component $D$, and a leg mapping $\phi$. Our goal is to find either a
$\phi$-legged $\nu$-appearance \textbf{OR} a $\pi$-appearance in $D$.
The way we solve this generalized problem is very similar, conceptually, 
to the way we solve the unrestricted problem; we decompose $D$  into an $m \times
 m$ grid of subboxes,$D_{m,m}$, by 
 performing gridding of $D.$  Then, we either find $\pi$ in
 $D_{m,m}$, or, 
using Lemma~\ref{clm:MarcusTardos}, conclude that
there are only linearly many dense subboxes in $D_{m,m}$. At that point, we
find a $\phi$-legged $\nu$-appearance by reducing it to the same problem
of a 
$\phi'$-legged $\nu'$-freeness of smaller $\nu' \in \mathcal S_{r'}, ~ r' < r,$
or, self-reducing the problem for finding $\phi$-legged $\nu$-appearance
but in a sub-component $D'$ whose size is a factor $m$ smaller than that of the size of $D$. This is done in a
similar way to what is described above in Case 1.

In summary, the algorithm for  $\mathsf{GeneralizedTesting}_{\pi}(\nu)$ is
very similar to the algorithm for testing $\pi$-freeness, with the same two
cases, where Case 2 becomes recursion to finding appearances of a smaller permutation, and where the base case is for permutations of length $2$. As we show in Section
\ref{sec:generalizedtesting}, formally, 
$\mathsf{GeneralizedTesting}_{\pi}(\nu)$  strictly generalizes testing
$\pi$-freeness, and hence, the formal algorithm for testing
$\pi$-freeness will be a special case of $\mathsf{GeneralizedTesting}_{\pi}(\nu)$.

\end{itemize}

\section{Gridding}\label{sec:layering}

In this section, we describe an algorithm that we call
Gridding (\cref{proc:preprocess}), which is a common subroutine to all our algorithms.
The output of Gridding, given oracle access to the function $f:[n] \to \R$ and a parameter $m \leq n$, is an $m \times m$ grid of boxes
that partitions either the grid $G_n$ defined by $f$ or a region inside of it into boxes,
with the property that the density of each box, which we define below, is well controlled.

\begin{definition}[Density of a box]\label{def:density}
Consider index and value subsets $S \subseteq [n]$ and $I \subseteq R(f)$, respectively.
The density of $\bx(S,I)$, denoted by $\den(S,I)$, is 
the number of points in $\bx(S,I)$ normalized by its size $|S|$. 
\end{definition}

\begin{definition}[Nice partition of a box]\label{def:nice-partition}
For index and value sets $S \subseteq [n]$ and $I \subseteq R(f)$ and parameter $m \leq n$, we say that
${\mathcal I} = \{I_1, I_2, \dots, I_{m'}\}$ forms a \textsf{nice} $m$-partition of $\bx(S,I)$ if:
\begin{itemize}
\item $m' \leq 2m $,
\item $I_1, \ldots ,I_{m'}$ are pairwise disjoint, and $\bigcup_{j
    \in [m']} I_j = I$. In particular, 
  the largest value in $I_j$ is less than the  smallest value in $I_{j'}$ for $j < j'$.
\item for $j \in [m']$, either $\den(S,I_j) < \frac{4}{m}$ $\mathrm{OR}$ $I_j$ contains exactly one value and
 is such that $\den(S,I_j) \geq \frac{1}{2m}$. 
  In the first case, we say that $\bx(S,I_j)$ is a
  \emph{single-valued layer} of $\bx(S,I)$, and in the second case, we say that $\bx(S,I_j)$ is a \emph{multi-valued layer} of $\bx(S,I)$.
\end{itemize}
\end{definition}

\subsection{Layering}
The main part of Gridding is an algorithm
Layering which is described in~\cref{proc:layering}. A similar algorithm was used by Newman and Varma~\cite{NV20} for estimating the length of the longest increasing subsequence in an array.
Layering$(S,I,m)$, given $S \subseteq [n], I \subseteq R(f), m \leq n$ as inputs, and outputs, with probability at least
$1 - 1/n^{\Omega(\log n)}$, a set  $\mathcal I$ of intervals that is  a nice
$m$-partition of  $\bx(S,I)$.  It works by  sampling
$\tilde{O}(m)$ points from $\bx(S,I)$ and outputs the set $\mathcal{I}$ based on these samples. 
Note that both the sets $S$ and $I$ are either contiguous index/value intervals themselves or a disjoint union of at most $k$ such contiguous intervals. 
Additionally, we always apply the algorithm Layering to boxes of
density $\Omega(1/\log n)$.

  \begin{algorithm}
  \caption{Layering($S,I, m$)}
  \label{proc:layering}
\begin{algorithmic}[1]
\State Sample a set of $m \log^4 n$  
indices from $S$ uniformly and independently at random.
\State Let $U$ denote the multiset of points in the sample that belong to $\bx(S,I)$ and let $u$ denote the cardinality of $U$ including multiplicities. If $u < m \log^2 n$, then \textbf{FAIL}. \label{stp:layering-fail}

\State We sort the multiset of values $V = \{f(p): ~ p \in U \}$ to form a strictly increasing sequence $\mathsf{seq}
= (v'_1 <
 \ldots < v'_q)$, where, with each $i \in [q]$, we associate a weight $w_i$
 that equals the multiplicity of $v_i'$ in the multiset $V$ of
 values.
 \Comment{{Note that
 $\sum_{i \in [q]} w_i = u$.}}

\State We now partition the sequence $W = (w_1, \ldots ,w_q)$ into maximal
disjoint contiguous subsequences $W_1, \ldots W_{m''}$ such that for each $j \in [m'']$, either $\sum_{w \in
  W_j} w <  2u/m$, or $W_j$ contains only one member $w$ for
  which $w > u/m$. 
  
  \Comment{{This can be done greedily as follows. 
If $w_1 > u/m$ then $W_1$ will contain
only $w_1$, otherwise $W_1$ will contain the maximal subsequence
$(w_1, \ldots, w_i)$ whose
sum is at most $2u/m$. We then delete the members of
$W_1$ from $W$ and repeat the process. 
For $i \in [m'']$, let $w(W_i)$ denote the total weight in $W_i$.}}

Correspondingly, we obtain a
partition of the sequence $\seq$ of sampled values into at most $m''$ subsequences $\{\seq_j\}_{j \in [m'']}$. 
Some subsequences
contain only one value of weight at least $u/m$ and are called \emph{single-valued}.  
The remaining subsequences are called \emph{multi-valued}.

For a subsequence $\seq_j$, let $\alpha_j = \min(\seq_j)$ and $\beta_j = \max(\seq_j)$.
Let $\beta_0 = \inf(I)$.
Note that $\alpha_j \leq \beta_j$  and $\beta_{j-1} < \alpha_j$ for all $j \in [m'']$.

\State For $j \in [m'']$, we associate with the subsequence $\seq_j$, an interval $I_j \subseteq
\mathbb{R}$, where $I_j = (\beta_{j-1}, \beta_j] \cap I$, and an approximate density
$\widetilde{\den}(S,I_j) = w(W_j)/u$. The interval is multi-valued or single-valued depending on whether its corresponding sequence is multi-valued or single-valued, respectively.

\State For $j \in [m'']$, if the interval $I_j$ is the disjoint union of two contiguous intervals $I_j^{(1)}$ and $I_j^{(2)}$, then drop such an interval $I_j$ from consideration. 

\Comment{{ This situation can arise since $I$ is the disjoint union of several contiguous intervals and hence $I_j$ can contain points from two such consecutive and contiguous subintervals of $I$. In this case, by definition, $I_j$ is a multi-valued interval.}}

\State \textbf{Return} the set $\mathcal{I} = \bigcup_{\ell \in [m']} I_\ell$ of the remaining $m' \leq m''$ intervals.
\end{algorithmic}
\end{algorithm}

\begin{claim}\label{clm:layering}
If 
$\den(S,I) > 1/\log n$, then with probability 
 $1 - 1/n^{\Omega(\log n)}$, \text{Layering}$(S,I,m)$ returns a collection of intervals
 $\mathcal I = \{I_j\}_{j=1}^{m'}$ such that $\mathcal I$  is a nice $m$-partition of  $\bx(S,I)$. Furthermore, it 
makes a total of $m\log^4 n$ queries. 
\end{claim}
\begin{proof}

Since $\den(S,I) > 1/\log n$, a Chernoff bound implies that, with
probability at least $1 - \exp(-(m\log^3 n)/8)$, at least $m \log^2 n$ of the sampled points fall in
$\bx(S,I)$ and the algorithm does not fail in Step~\ref{stp:layering-fail}. In the rest of the analysis, we condition on this event happening.

To prove that $m' \leq 2m$, it is enough to bound $m''$, which is the total number of intervals formed before some multi-valued intervals are dropped at the last step.
The total number of intervals of weight at least $u/m$ is at most $m$ since the total weight is $u$. Other intervals have
weight less than $u/m$ and for each such interval $I_j$, it must
be the case that $I_{j-1}$ and $I_{j+1}$ are of weight at least $u/m$. It follows that $m' \leq m'' \leq 2m$.


We now prove that the family $\mathcal{I}$ output by Layering is a nice $m$-partition of $\bx(S,I)$.
It is clear from the description of~\cref{proc:layering} that the intervals output by the algorithm are disjoint.  
Let $\mathcal{B} = \{[a,b]: a,b \in I \text{ and } \exists v,w \in S \text{ such that } f(v) = a, f(w) =b\}$ denote the set of all true intervals of points 
from $\bx(S,I)$.  
Consider an interval $[a,b] \in \mathcal{B}$ such that $\den(S,[a,b]) \ge \frac{4}{m}$. The probability that 
less than $2u/m$ points from the sample have values in the range $[a,b]$ is at most $1/n^{\Omega(\log n)}$ by a Chernoff bound.
Conditioning on this event implies that for every $I_j, j\in [m']$ output as a multi-valued interval by the algorithm, we have $\den(S,I_j) < \frac{4}{m}$. 
Finally, for a single-valued interval $[a,a]  \in \mathcal{B}$ such that $\den(S, [a,a]) < \frac{1}{2m}$, with probability at least $1 - 1/n^{\Omega(\log n)}$, we have $\widetilde{\den}(S,[a,a]) \leq \frac{3}{2}\den(S, [a,a]) < \frac{3}{4m}$, where $\widetilde{\den}(S,I')$ denote the estimated density (as estimated in~\cref{proc:layering}) for a layer $\bx(S,I')$ when $I' \subseteq I$. 
Conditioning on this event implies that  for every $I_j, j\in [m']$ output as a single-valued interval by the algorithm, we have $\den(S,I_j) \geq \frac{1}{2m}$.

%


Finally, the number of layers that get dropped is at most $k$, each of them is multi-valued and hence, conditioning on the above events, the density of points lost in this process is at most $\frac{k}{2m} = o(1)$. Putting all of this together, we can see that the layers form a nice $m$-partition of $\bx(S,I)$.

The claim about the query complexity is clear from the description of the algorithm. 
\end{proof}

\subsection{Gridding}
Next, we describe the algorithm Gridding (see Algorithm~\ref{proc:preprocess}). 
\begin{algorithm}
\caption{Gridding$(S,I,m,\beta)$}
\label{proc:preprocess}
\begin{algorithmic}[1]
  \Require $S \subseteq [n]$ is a union of disjoint stripes,
  $I \subseteq R(f)$ is a disjoint union of intervals of values in
  $R(f)$, $D = \bx(S,I)$ is the domain on which we do gridding,
$m$ is a parameter defining the `coarse' grid size,
 $\beta < 1$ is a density threshold.

\hspace{-1.5cm}
\textbf{Output:} A grid of boxes $G_{m',m'},~ m' \leq 2m $ in
which there will be $\tilde{O}(m')$
 marked boxes.

\State Call Layering (\cref{proc:layering}) on 
inputs $S,I,m$. 
This returns, with high probability, a set $\mathcal{I}$ of
$m' \leq 2m$ value intervals $I = \bigcup_{j \in [m']} I_j$ that forms a nice $m$-partition  of $\bx(S,I)$.

\State Partition $S$ into $m'$ contiguous
intervals $S_1, \ldots S_{m'}$ each of size $|S|/m'$. This defines the
grid  of boxes  $D_{m', m'}=\{\bx(S_i, I_j):~ (i,j)\in [m']^2\}$ inside
the larger box $\bx(S,I)$.

\State Sample and query, independently at random, $\frac{\log^4 n}{\beta^2}$ points from each stripe
  $S_i, i \in [m'].$   For each $(i,j) \in [m']^2$, if $\bx(S_i, I_j)$
  contains a sampled point, then tag that box as \emph{marked}. 
If $\bx(S_i,I_j)$ contains at least $3\beta/4$ fraction of the sampled
  points in the stripe $S_i$, tag that box as {\em dense}.
  
  \State \textbf{Return} the grid $D_{m', m'}$ along with the tags on the various boxes.
\end{algorithmic}
\end{algorithm}

We note that initially, at the topmost recursion level of the algorithm
for $\pi$-freeness, we call Gridding with $S=[n],~ I = (-\infty,
+\infty)$ and our preferred $m$ which is typically $m = n^{\delta}$,
for some small $\delta <1$.

We prove in~\cref{cl:q_layering} that running
Gridding$(S,I,m)$ results in  a partition of $\bx(S,I)$ into a grid of
boxes $G_{m',m'}$ in which either the marked boxes 
contain a $\pi$-appearance, or the union of
 points in the marked boxes 
 contain {\em all} but an $\eta$ fraction of the points in
 $G_n$, for $\eta << \eps$.
 Additionally, with high probability, all boxes that are tagged {\em
  dense} have density at
least $\frac{1}{8}$-th of the threshold $\beta$ for marking a box as dense.

\begin{claim}\label{cl:q_layering}
Gridding$(S,I,m,\beta)$ returns a grid
of boxes  $D_{m',m'}$  that decomposes $\bx(S,I)$. It makes $\tilde{O}(m/\beta^2)$ queries, and with high probability,
  \begin{itemize}
    \item The set of intervals corresponding to the layers of $D_{m',m'}$ form a nice $m$-partition of $I$.
   \item  For every $i \in [m'],$
  either the stripe $\bx(S_i,I)$
        contains at least  $\frac{\log^2 n}{100\beta^2}$ marked boxes, or the number of
        points in the marked boxes in $\bx(S_i,I)$ is at least  $
        (1-\frac{1}{\log^2 n}) \cdot |S_i|.$
   \item Every box that is tagged dense has density at
     least $\beta/8$, and every box of density at least $\beta$ is
     tagged as dense.    
  \end{itemize}
  \end{claim}
\begin{proof}
  The bound on query complexity as well as the first item follows directly from~\cref{clm:layering}. The third item follows by a simple application of the Chernoff bound followed by a union bound over all stripes.

  For
  the second item, fix a stripe $S_i$ of $D_{m',m'}$. 
  Let $T \subseteq [m']$ be the set of all $j \in [m']$ such that $\bx(S_i,I_j)$ gets marked during Step 3 in $\rm{Gridding}$. 
 If $\sum_{j \in T}\den(S_i, I_j) \geq
 1-1/(\log^2 n)$ then we are done.
 Otherwise, each query  independently hits a box 
 that is not marked by any of the previous queries with probability greater than $1/(\log^2 n)$.
Thus, the expected number of boxes marked is at least $\log^2 n/\beta^2$.  Chernoff bound implies that, with probability at least $1 - n^{-\Omega(\log n)},$ at least $\frac{\log^2 n}{100\beta^2}$
 boxes are marked. 
 The union bound over all the stripes implies the second item.  
\end{proof}

\section{Generalized testing of forbidden
  patterns}\label{sec:generalizedtesting}
In this section, we formally define the problem of testing (or
deciding) freeness from $\nu$-appear\-ances with a certain leg-mapping. We then 
provide an algorithm for a relaxation of this 
testing problem.
Our algorithm for testing $\pi$-freeness is based on this. 
 A description of the algorithm, and
 a proof sketch for the case of patterns of length $3$ for specific leg-mappings is provided in~\cref{sec:example3}. 
 It illustrates some of the
   ideas for the general case, and it might be easier to follow. This
   is followed by an algorithm and a correctness proof for the most
   general case.

  Recall that $G_{n}$ denotes the $n \times |R(f)|$ grid that represents the input function $f:[n]
\to \R$. Let $G_{\ell,\ell}$ be a partition of $G_n$ into a grid of
 boxes for an arbitrary $\ell \geq 1$, and $D$ be a connected component in $G_{\ell,\ell}$
containing $t$ boxes $B_1, \ldots, B_t$.
Let $\nu \in \mathcal S_r$, and let $\phi: [r] \mapsto \{B_1,\dots,B_t\}$ be
an arbitrary mapping of the legs of $\nu$ into the boxes of
$D$, where $t \leq r$. We say that $1 \leq i_1 < \ldots < i_r \leq n$ is a
$\phi$-legged $\nu$-appearance if $(i_1, \ldots ,i_r)$ forms a
$\nu$-appearance in $G_n$ such that the point
$(i_j,f(i_j))$ is contained in the box $\phi(j)$ for each $j \in [r]$.  That is, the
legs of the $\nu$-appearance are mapped into the boxes
given by $\phi$. For example, consider Figure~\ref{fig:1}(B),  $\nu = (3,2,1,4),$ and $D$  the 
component formed by the two boxes in the same layer.
The function $\phi$ maps the $3$-leg and $2$-leg of the $\nu$-appearance to the
left box and the $1$-leg and $4$-leg to the right box. 
The connected component $D$ is $\phi$-legged $\nu$-free if it contains no $\phi$-legged $\nu$-appearances.
It is $\eps$-far from being $\phi$-legged $\nu$-free if the values of at least $\eps \cdot |\bigcup_{j \in [t]} \mathrm{St}(B_j)|$ points belonging to $D$ must be modified in order to make $D$ free of $\phi$-legged $\nu$-appearances, where $\mathrm{St}(B)$ for a box $B$ denotes the stripe corresponding to $B$. Note that a function could be $\phi$-legged $\nu$-free but very far from being $\nu$-free. For example, for the $\phi$ referred to above in~\cref{fig:1}(B), it could be that there are many appearances of $(3,2,1,4)$ which are all in the left box or all in the right box or both, but there are no appearances with the leg mapping $\phi$.

 The property of being free of
 $\phi$-legged $\nu$-appearances is a generalization of the property of
 $\pi$-freeness. Taking $\ell=1,$ $G_{\ell,\ell}$ is just $G_n$
 itself viewed as one single box $D$. When $\nu = \pi$ and $\phi$ is the constant function that maps each leg to the unique box
$D$, any $\pi$-appearance in $G_n$
 is a $\phi$-legged $\nu$-appearance.
 
The problem of testing $\phi$-legged $\nu$-freeness was not previously explicitly studied and we believe that it is an interesting research direction in its own right. 
 Even though its complexity is not known, we encounter it only as a subproblem in the testing of
 standard $\pi$-freeness. This motivates the following definition.
 
\begin{definition}\label{def:generalized-testing}
 Let  $\pi \in \mathcal S_k$, $\nu \in \mathcal{S}_r$, where $r \leq k$. Let $G_n$ denote the $n \times |R(f)|$ grid that represents the input function $f:[n] \to \R$. 
 For $\ell \geq 1$, let $G_{\ell,\ell}$ be a
 decomposition of $G_n$ into boxes. For $t \leq r$, let $D$ be  a $t$-boxed
 single component composed of the boxes  $B_1, \ldots
   ,B_t$ in 
   $G_{\ell,\ell}$ and  let $\phi : [r] \mapsto \{B_1, \ldots ,B_t\}$. The problem 
 $\mathsf{GeneralizedTesting}_{\pi}(\nu,\phi,D)$ is the following. For a parameter $\eps \in (0,1)$, if $D$ is $\eps$-far from being $\phi$-legged $\nu$-free, 
  find a $\phi$-legged
 $\nu$-appearance in $D$ \textbf{OR} find any (unrestricted) $\pi$-appearance . 
 \end{definition}

 Our algorithm for $\mathsf{GeneralizedTesting}_{\pi}(\nu,\phi, D)$ is called $\mathsf{AlgTest}_\pi(\nu,\phi,D,m,\eps)$ and is presented in~\cref{algo:main}. The algorithm has a permutation $\pi \in \mathcal{S}_k$ hardwired into it. It gets oracle access to a function $f:[n] \to \R$ and
its inputs are (1) $\nu \in \mathcal{S}_r$, $r \leq k$, (2) a component~$D$ composed of the boxes $B_1, \dots, B_t, ~ t
 \leq r, $ in a grid
 $G_{\ell,\ell}$ of $G_n$, where $\ell \geq 1$, (3) a mapping
 $\phi:[r] \to \{B_1, \dots B_t\}$, (4) a distance parameter
 $\eps \in (0,1)$, and (5) a free parameter $m \geq 2$. 
 The parameter $m$ is used  to control the 
 query complexity. 
We are not specifying $\ell$ explicitly here, but it is implicit in the way boxes of $D$ are defined.
  If $D$ is $\eps$-far from being free of
 $\phi$-legged $\nu$-appearances, with high probability, the algorithm either finds a
 $\pi$-appearance or a $\phi$-legged $\nu$-appearance in $D$.

   \begin{algorithm}
\caption{AlgTest$_{\pi}(\nu,\phi,D, m,\eps)$} 

\begin{algorithmic}[1]

\Require pattern $\nu \in \cS_r$; 
$D$ is a component containing 
boxes $B_1, \ldots, B_t$ in $G_{\ell, \ell}$ for $t \in [r]$;
the function $\phi: [r] \mapsto \{B_1, \ldots ,B_t\}$ is a leg-mapping of $\nu$ into the boxes of $D$; parameter $m \leq n$; parameter $\eps \in (0,1)$.

\noindent
\hspace{-0.9cm}
\textbf{Goal:} Find a $\phi$-legged $\nu$-appearance \textbf{or} an
unrestricted $\pi$-appearance
in $D$.


\State Let $S = \bigcup_{i \in [t]} \St(B_{i})$ and $I = \bigcup_{i
  \in [t]} L(B_{i})$ define $\bx(S,I)$ in $G_{\ell,\ell}$ that contains
$D$. \label{stp:box-def} 
\vspace{-0.7cm}
\State\textbf{Base cases:} \textbf{Call} $\mathsf{BaseCaseAlgTest}_{\pi}(\nu,\phi,D,S, m,\eps)$ and output what it outputs. 
\label{base-case}
\State \textbf{Gridding  $D$:}  We set $\beta = \frac{\eps}{200k\kappa(k)}$. Call Gridding$(S,I,m,\beta)$ which returns  a decomposition of $\bx(S,I)$ 
into an $m' \times m' $ grid $D_{m',m'}$  of
subboxes, where $m \leq m' \leq 2m$. 
A subset of these boxes in $D_{m',m'}$ are \emph{marked} and a subset of the marked boxes are \emph{dense}. \label{stp:gridding}

\State \textbf{Simple case:} If $D_{m',m'}$ contains more than $\kappa (k) \cdot m'$
marked subboxes then \textbf{output} ``$\pi$-appearance is found''. \label{stp:too-many-boxes}

\State \textbf{Sparsification:} Delete each stripe and layer in $D_{m',m'}$
that contains more than $d=100k \kappa(k)/\eps$ marked
subboxes. Delete all non-dense subboxes. \label{stp:sparsification}

\State \textbf{Multi-component configurations:} 
Let $c = r^{3r}$ denote an upper bound on the number of distinct configurations with at most $r$ components.
For each $(\phi,\nu,D)$-consistent configuration $\cC$ (see~\cref{def:phi-consistence}) with $p > 1$ many components $\cC_1, \ldots \cC_p$, sub-permutations $\nu_1, \ldots ,\nu_p$
of $\nu$ and mappings $\phi_1, \ldots , \phi_p$: \label{stp:multi-comp}

\begin{enumerate}
\item Recursively \textbf{call} $\mathsf{AlgTest}_{\pi}(\nu_i,\phi_i,D_i,m,\eps')$ with distance parameter $\eps' = \frac{9\eps}{10kcr^2 \cdot r! \cdot (2d)^r}$ for {\em every} component $D_i$, where $D_i$
is a copy of $\cC_i$  in $D_{m',m'}$, and is contained in $D$. Note that $\nu_i$'s are smaller patterns.

\item \textbf{Output}  ``$\phi$-legged $\nu$-appearance is found'' if for  a copy  $(D_1, \ldots ,D_p)$ of
$(\cC_1, \ldots ,\cC_p)$, for each $i \in [p]$, $D_i$ contains a $\phi_i$-legged $\nu_i$-appearance. 
If a $\pi$-appearance is found
among the sampled points, \textbf{output} ``$\pi$-appearance is found'' .
\end{enumerate}
  
\State \textbf{Single  component configurations:}  
Let $\mathcal A$ be the set of all possible copies in $D$ of $(\phi,\nu,D)$-consistent single-component configurations $\cC$ in
$D_{m',m'}$. \label{stp:single-comp} 
 
\begin{enumerate}
\Loop~$\frac{\log^3 n}{\eps^{r}}$ times:
\State \textbf{Sample} a member $D'$ 
from $\mathcal A$ uniformly at random, and 
for each $\phi$-consistent mapping $\phi'$ (\cref{def:phi-consistence}), \textbf{call} $\mathsf{AlgTest}_{\pi}(\nu,\phi',D',m,\eps'')$
with $\eps'' = \frac{9\eps}{20k \cdot (2d)^r \cdot (r-1)! \cdot r^r}$. 

\EndLoop
\end{enumerate}

\State \hspace{-0.6cm} If no output is declared in any of the previous steps, \textbf{output} ``not found''.
\end{algorithmic}
\label{algo:main}
\end{algorithm}

 \begin{algorithm}
\caption{BaseCaseAlgTest$_{\pi}(\nu,\phi,D,S, m,\eps)$} 
\begin{algorithmic}[1]

\Require pattern $\nu \in \cS_r$; 
$D$ is a component containing 
boxes $B_1, \ldots, B_t$ in $G_{\ell, \ell}$ for $t \in [r]$; $S$ is a set of indices encompassing the points in $D$;
the function $\phi: [r] \mapsto \{B_1, \ldots ,B_t\}$ is a leg-mapping of $\nu$ into the boxes of $D$; parameter $m$; parameter $\eps \in (0,1)$.

\noindent
\hspace{-0.9cm}
\textbf{Goal:} Find a $\phi$-legged $\nu$-appearance \textbf{or} an
unrestricted $\pi$-appearance
in $D$.

\State  If $|S| \leq  m$ query all
indices in  $S$. \textbf{Output}  ``$\pi$-appearance is found''  or 
``$\phi$-legged  $\nu$-appearance is found'' if one of these is found.
 \State If $k=1$, \textbf{output} ``$\pi$-appearance is found'' if $D$ contains
  a point.

\State If $r = 2$, use the test for restricted appearance of
  $2$-patterns as described in the proof of~\cref{lm:2-pattern}. 
  If $r=1$,
  \textbf{output} ``$\phi$-legged  $\nu$-appearance is found''  if the box $\phi(1)$
  contains a point.

\State If the sampled points in $D$
  already contain a $\phi$-legged $\nu$-appearance, or contain a $\pi$-appearance then \textbf{output} ``$\phi$-legged  $\nu$-appearance is found'' or ``$\pi$-appearance is found'' respectively.

\end{algorithmic}
\label{algo:base-case}
\end{algorithm}

 The algorithm is recursive.  
 A recursion is done by reducing $\nu$ to smaller length patterns, and/or
 self-reduction to the same $\nu$ but on a smaller size box
 $D'$. The important base cases  (see \cref{base-case}) are when the size of $D$ is small enough to allow
 queries to all indices in $D$, or when  $\nu \in \mathcal S_2$, in
 which case the algorithm is reduced to testing monotonicity.

 We recall that the permutation $\pi \in \cS_k$ is fixed and hardwired into the
 algorithm. The grid $G_n$ is fixed and not part of the
 recursion.  The algorithm makes its queries to the $t$-boxed component $D$ in a grid
 of boxes $G_{\ell,\ell}$ defined with respect to $G_n$ (that is, a subfunction of the original function $f$).  
 The first main step of the algorithm is to grid the region $\bx(S,I)$ with parameter $m$ into a grid $D_{m',m'}$ of subboxes, where $m' = O(m)$ and
 $S \subseteq [n]$ and $I \subseteq R(f)$ are the unions of the sets of all indices
  and values, respectively, in the $t$ boxes of $D$. 
In this process of refining the existing boxes of $D$ into subboxes, the legs of a $\phi$-legged $\nu$-appearance
 in $D$ are mapped into subboxes formed by $D_{m',m'}$. 
 The set of subboxes that contain the legs of a particular $\phi$-legged $\nu$-appearance
 can form a different configuration (with one or more connected components in it) than the 
 configuration corresponding to the component $D$. This prompts 
 us to make the following definitions which are used in the algorithm description.
 
 \begin{definition}\label{def:phi-consistence}
 Consider a configuration $\mathcal{C}$ consisting of components $C_1, \dots C_p$ for $p \in [r]$.
 Additionally, for $i \in [p]$ let $\phi_i$ be a mapping from the set of legs of a permutation $\nu_i$ to the set of boxes in $C_i$.
 The configuration $\mathcal{C}$ along with $\{\nu_i\}_{i \in [p]}$ and $\{\phi_i\}_{i \in [p]}$ is $(\phi, \nu, D)$-consistent
 if (1) the subboxes of the grid $D_{m',m'}$ contain a copy of $\mathcal{C}$ and (2) a union of the legs of $\phi_i$-legged $\nu_i$-appearances in the copies of 
 $C_i$ form a $\phi$-legged $\nu$-appearance in the corresponding copy of $\mathcal{C}$. 
 
 If $p = 1$, then $\nu_1 = \nu$ and we say that the mapping $\phi_1$ is simply $(\phi,D)$-consistent.
 \end{definition}
  
 In order to exemplify these definitions, let $\nu = (1,3,2)$ and let $D$ consist of two boxes in the same layer and let $\phi$ map the $1,3$ legs to the box on the left and the $2$ leg to the box on the right. The $(1,3,2)$-appearances in the green boxes in~\cref{fig:2} illustrate this. These appearances can belong to various possible configurations upon further gridding of the two boxes as illustrated by the various cases shown in the same figure. Specifically, the smaller orange boxes are representative of subboxes obtained upon gridding of the two green boxes.~\cref{fig:2}(A) shows a $(\phi, \nu, D)$-consistent configuration composed of three components and~\cref{fig:2}(B)-(D) show $(\phi,\nu,D)$-consistent configurations composed of two components.~\cref{fig:2}(E)-(H) show $(\phi, \nu, D)$-consistent configurations with just a single component and in these cases, the leg mappings are $(\phi,D)$-consistent.

We further note that our
 algorithm  does not use
 any  structure of $\pi$. The only role of
 $\pi$ in the algorithm is to ensure that after gridding, the resulting 
 grid $D_{m',m'}$ contains only $O(m)$ marked boxes as
 otherwise, by~\cref{clm:MarcusTardos}, a $\pi$-appearance is
 guaranteed.

The following theorem asserts the correctness of $\mathsf{AlgTest}_{\pi}(\pi, \phi,G_n,m,\eps)$ and the
corresponding query complexity.

\begin{theorem}\label{thm:correctness-main}
  Let $\pi \in \mathcal{S}_k$ and $\nu \in \mathcal{S}_r, ~ r \leq k.$  
  Let $f:[n] \to \R$ and let $G_n$ denote the $n \times |R(f)|$ grid of function points. 
  Let $\ell \geq 1$ and $G_{\ell, \ell}$ be an $\ell \times \ell$ grid decomposing $G_n$.
  Let $D$ be a connected component in $G_{\ell,\ell},$ composed of
  boxes $B_1, \dots ,B_t, ~t \leq r$ and  $\phi: [r] \to \{B_1, \dots B_t\}$.
  Let $S = \bigcup_{i \in [t]} \St(B_{i})$ and $I = \bigcup_{i
  \in [t]} L(B_{i})$.
  Let $\eps \in (0,1)$. Let $m = k|S|^{\eta}$
   for $\eta \in \left(\Omega\left(\frac{1}{\log \log \log n}\right),1\right)$. 
   Let $a$ be the smallest integer such that $m^a \geq k^{a-1}|S|$.
  If $D$ is $\eps$-far from $\phi$-legged
  $\nu$-freeness, then $\mathsf{AlgTest}_{\pi}(\nu,\phi,D,m,\eps)$ finds either a $\phi$-legged $\nu$-appearance or a $\pi$-appearance, with
  probability at least $1 - o(1)$. Its query complexity is 
  $\tilde{O}\left(m^r \left(\frac{k}{\eps}\right)^{\Theta(k^a)}\right)$, where the $\tilde{O}(\cdot)$ notation hides polylogarithmic factors in $n$.
\end{theorem}

We note that since $\mathsf{AlgTest}_\pi(\nu,\phi,D,m,\eps)$ either finds a $\pi$-appearance
or a $\phi$-legged $\nu$-appearance in $D$, then if $D$ is free of
$\phi$-legged $\nu$-appearances, the algorithm will never return
such an appearance. 

Our $\pi$-freeness tester is simply $\mathsf{AlgTest}_\pi(\pi,\phi,G_n,m,\eps)$, where $\phi$ is the constant function mapping each leg 
to the entire grid $G_n$ and $m = kn^{1/a}$ for an integer parameter $a \leq \log \log \log n$ that we can control.

\begin{corollary}
There is a $1$-sided error test for $\pi$-freeness of functions of the form $f:[n] \to \R$, for every $\pi \in
\mathcal S_k$, with query-complexity $\tilde{O}(\left(\frac{k}{\eps}\right)^{\Theta(k^a)}\cdot n^{k/a})$, for integer $a \leq \log \log \log n$.
  \end{corollary}

\subsection{Proof of Correctness}\label{sec:proof-correct1}
In~\cref{sec:example3}, we start with  a description of the algorithm and the proof sketch for the first
non-base case of testing $\phi$-legged $\nu$-freeness for $\nu \in \mathcal{S}_r$, $r=3$, with respect to an arbitrary $\pi \in \mathcal S_k$ and fixed
$k \geq 4$. 
In~\cref{sec:full_proof}, we present
the proof  of~\cref{thm:correctness-main}. 

\subsubsection{An example for \texorpdfstring{$\nu \in \mathcal S_3$}{nu in S3}}\label{sec:example3}
For this exposition, we  fix $\nu = (1,3,2)$, and $D$ being
composed of $2$ boxes $B_1$, $B_2$ in the same layer, 
where
$B_1$ is to the left of $B_2$, and $\phi$ maps the $1,3$ legs of
$\nu$ to $B_1$, and the $2$-leg to $B_2$.  See~\cref{fig:2}(D) for an
illustration of one such case. In the figure, the 
green boxes represent $B_1$ and $B_2$. 
The orange boxes indicate the subboxes in the finer 
grid formed when gridding is called on the green boxes.

\begin{figure}
\begin{minipage}{0.48\textwidth}
\begin{center}
\includegraphics[scale = 0.55]{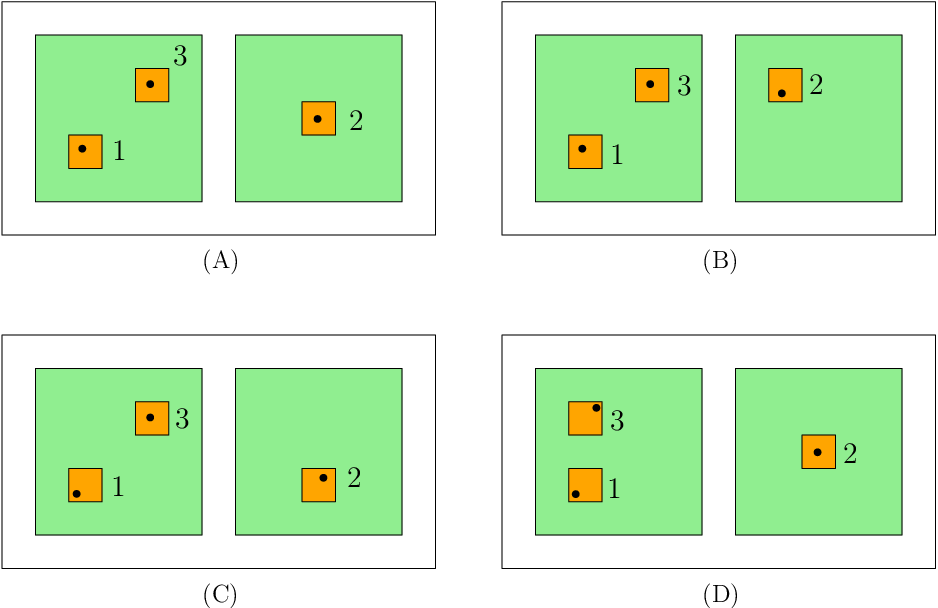}
\end{center}
\end{minipage}
\vline
\vline
\begin{minipage}{0.48\textwidth}
\begin{center}
\includegraphics[scale = 0.55]{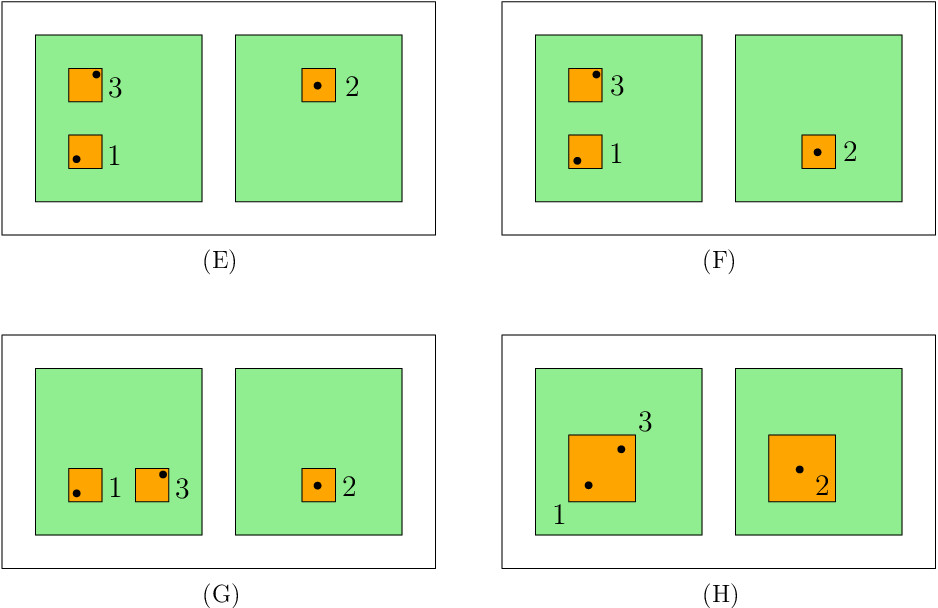}
\end{center}
\end{minipage}
\caption{$(1,3,2)$-appearances with legs spread across two green boxes sharing a layer resulting in new configurations upon further gridding of the boxes into smaller orange boxes.}
\label{fig:2}
\end{figure}

We note that \cref{fig:2}(D) illustrates the hardest case for $\nu \in \mathcal S_3$. There
are additional one-component configurations in which the boxes are in the same
stripe or layer, but these turn out to be  much easier.
We will set $m = m(n)$ to be defined later
and express the complexity as a function of $m$. 
We do not specify $\pi$ since, as explained above, $\pi$ is only
needed at Step~\ref{stp:too-many-boxes} of the algorithm when
the number of marked boxes is superlinear in $m$ in some recursive call.
The argument here holds for any $\pi \in \mathcal
S_k, ~ k\geq 4$.

\paragraph{Algorithm to test $\phi$-legged $\nu$-freeness.} Let $\nu=(1,3,2)$ and $\phi$ be such that $\phi(1) = \phi(3) = B_1$ and $\phi(2) = B_2$.
\begin{enumerate}
\item We assume that $B_1,B_2$ are over $s \leq n$ indices each, and
  that the distance of $B_1 \cup B_2$ from
  $\phi$-legged $\nu$-freeness is at least $\eps =
  \Omega(1)$. In particular $B_1, B_2$ are dense. 
In Step~\ref{stp:gridding} of~\cref{algo:main}, we
grid the appropriate box containing $B_1 \cup B_2$ (as defined in~Step~\ref{stp:box-def} of~\cref{algo:main})  into a  $m' \times
  m'$ grid, 
  $D_{m',m'}$, of  subboxes (each over $2s/m'$ indices), where $m \leq m' \leq 2m $.
 We either find a $\pi$-appearance among the sampled points or we may assume, after~Steps~\ref{stp:too-many-boxes} and \ref{stp:sparsification} 
that there are
  $O(m')$ dense subboxes in $D_{m',m'}$ and that each layer and
  each stripe contains $O(1)$ dense boxes. The latter claim is obtained by an
  averaging argument and is described in the formal proof
  in~\cref{sec:full_proof}. The argument is that if $B_1 \cup
  B_2$ contains a large matching of $\phi$-legged $\nu$-appearances, then so
  does the restricted domain after deleting points from non-dense boxes as well as
  and deleting layers and stripes that contain too many dense boxes from $D_{m',m'}$. These
  steps take $\tilde{O}(m)$ queries overall, which is the complexity of the algorithm Gridding.

  \item A $\phi$-legged $\nu$-appearance in $B_1 \cup B_2$ can be in $8$
    possible configurations in the grid $D_{m',m'}$, as depicted in~\Cref{fig:2}.
Consider first $\cC_1, \ldots ,\cC_4$ as in \Cref{fig:2}(A)-(D), that form $2$ or $3$ components each. For these,
    a $\phi$-legged $\nu$-appearance in $B_1 \cup B_2$ decomposes into two or three
    subpatterns, and for which any restricted appearances in the
    corresponding components results in a $\phi$-legged
    $\nu$-appearance. For example,  in \Cref{fig:2}(B) the configuration $\cC_2$ contains one component $D_1 =
    (B_{1,3}, B_{2,2})$, where $B_{1,3} \in B_1, B_{2,2} \in B_2$, and
    another single boxed component $B_{1,1} \in B_1$, where $B_{i,j}$ is the 
    orange subbox contained within the green box $B_i$ and such that the $j$-th leg
    belongs to $B_{i,j}$ for $i\in [2], j \in [3]$.

    In Step~\ref{stp:multi-comp} of~\cref{algo:main}, we
    test each of the $O(m)$ many copies of $D_1$ for a $\phi'$-legged $(2,1)$-appearance for
    which $\phi'(2)= B_{1,3}$ and $\phi'(1)= B_{2,2}$. Then for
    any such $D_1$-copy in which such a $\phi'$-legged $(2,1)$-appearance is
    found, any nonempty dense box $B_{1,1}$ forming with $D_1$ a copy
    of $\cC_2$ results in a $\phi$-legged $\nu$-appearance.

    Since this is a reduction to generalized $2$-pattern appearance, the recursion
    stops here with $O(\log n)$-complexity per copy of $D_1$. Hence,
    altogether this will contribute a total of $\tilde{O}(m)$ queries.
    Procedures along the same lines work for any of $\cC_i, ~ i=1,2,3,4$.

   If a desired $\phi$-legged $\nu$-appearance (or a $\pi$-appearance) is found in the above process, then clearly a correct output is produced.

    On the other hand, if indeed $(B_1,B_2)$ contains $\Omega(s)$ (that is,
    linear in the size of $B_1 \cup B_2$) many $\phi$-legged
    $\nu$-appearances that are consistent with one of the configurations
    $\cC_i, i \in [4]$, then,  by an averaging argument, there will be such a $D_1$ and
    corresponding $B_{1,1}$ that together contribute $\Omega (s/m)$
    (that is, linear in the domain size of  $D_1$) such subpattern
    appearances.

    We note that for the more general case of $r > 3$,
    the reduction will be done in higher complexity 
    per component (that is dependent on $m$ rather than just $O(\log n)$). 

  \item Consider now a consistent configuration $\cC_i$ for $i=5,6,7,8$ that forms a single component
    (with $2$ or $3$ orange subboxes) as illustrated in~\cref{fig:2}(E)-(H).  In these cases, if such appearances
    contribute $\eps'$ to the total distance, then a simple averaging
    argument shows that for a uniformly sampled component, its
    distance from $\phi$-legged
    $\nu$-freeness will be linear. Hence in Step \ref{stp:single-comp}, sampling
    such a component will enable us to recursively find a $\phi$-legged
    $\nu$-appearance with high probability. Since the size of a
    component on which the recursive call is made is $\Theta(s/m)$,
    the complexity of this step is $\tilde{O}(q(s/m, \eps'))$, where
    $q(s,\delta)$ is the complexity of the algorithm, for the case of $\nu
    \in \mathcal S_3$, in terms of the size $s$ of $D$, and a
    distance parameter $\delta$.
\end{enumerate}

\paragraph{Correctness.} The correctness of the algorithm follows from the fact that if $D$ is
indeed far from being $\phi$-legged $\nu$-free, then it must be that
there are linearly many $\phi$-legged $\nu$-appearances
in at least one of the $8$ configurations discussed
above, and for each case, either a $\pi$-appearance or a
$\phi$-legged $\nu$-appearance is found, by induction. Note however,
that there is a drop in the distance parameter from $\eps$ to $\eps'$, due to the deletion of 
points in Step~\ref{stp:sparsification} of the algorithm, and the averaging arguments
resulting in the call with smaller distance parameters at Step~\ref{stp:multi-comp} and Step~\ref{stp:single-comp}. 
This does not matter as long as $\eps'$ is kept constant (or even $
1/\log n$), forcing the recursion depth to be bounded from above by a constant.

\paragraph{Query complexity.}We now analyze the query complexity of the algorithm for the special case described above.
The parameter $m$ is to be interpreted as the \emph{query budget} of the algorithm.
We abuse notation and use $s$ to indicate the total number of indices that the component $D$ contains.
Let $a$ be the smallest integer such that $m^a \geq s$.
This parameter $a$ denotes the recursion depth of our algorithm and 
we express our recurrence relation in terms of $a$.
Let $t(m,a)$ denote the query complexity of 
the above algorithm with parameter $m$ for functions over a domain of size $s \leq m^a$.  
We omit the dependence of the query complexity on $\eps$ and assume that $\eps = \Theta(1)$ for the purposes of this high level description.

For the base case, we have $a = 1$. Then, $q(m,1) = m = \Theta(s)$ since the algorithm can query all the indices and still be within the query budget.
\FloatBarrier
If $a > 1$, ignoring polylog factors, we have $t(m,a) = m + m + t(m,a-1)$. The
first summand here is the number of queries made by the Gridding. The
second summand is the number of queries made by Step 2 above (corresponding to Step~\ref{stp:multi-comp} in \cref{algo:main}). The last summand denotes
the query complexity of the recursive call on a subbox of size $\Theta(s/m)$
with the same $m$, for which the recursion depth is $a-1$.


The recurrence implies that $t(a,m)=\tilde{O}(am)$, which
implies a query complexity $s^{\delta}$ by choosing $m=
s^{\delta}$. We note that for $\delta = \Omega(1)$  the recursion
depth is $a = 1/\delta = O(1)$ as indeed needed to keep the distance
parameter constant. Moreover, for $\delta =
1/\log\log\log n$, the distance parameter $\eps' =
\Omega(1/\log n)$ at all recursion levels and the complexity becomes $s^{o(1)}$.
 \subsubsection{Formal Proof of Theorem \ref{thm:correctness-main} for
 general \texorpdfstring{$\nu \in \mathcal{S}_r$}{nu in Sr}}\label{sec:full_proof}

We now provide the formal proof of Theorem \ref{thm:correctness-main}. In
 what follows, we refer to the steps in the description of
 $\mathsf{AlgTest}_\pi(\nu,\phi,D,m,\eps)$ (see~\cref{algo:main}).
 Let $S, I$ be as defined in the algorithm (based on the component that
 $D$ contains). The proof is by induction on the parameters $a$ and~$r$ as defined in the 
 statement of Theorem \ref{thm:correctness-main}. To recall,
 $a$ is  the
 smallest integer for which $m^{a} \geq k^{a-1} |S|$ and indicates the recursion depth of the algorithm.

\paragraph{Base Cases.} For completeness we start with the base cases (see Algorithm~\ref{algo:base-case}). One of them is when $a = 1$, which is equivalent to
$m \geq |S|$, in which case the algorithm queries all indices in $S$ and solves the problem correctly with probability $1$.
The other base case is when $r = 2$, which is the same as
 testing for restricted $\nu$-appearance
for a $2$-pattern $\nu$. 
That is, the input is $\nu
\in \mathcal{S}_2$, a component~$D$ with at most two boxes, a leg
mapping function $\phi$ and the
distance parameter $\eps$.  There is no need of $\pi$ as we will show how to test
$\phi$-legged $\nu$-freeness
unconditionally. 
Lastly, the only $2$-pattern up to isomorphism is $\nu=(2,1)$ which is
assumed to be the input.

\begin{lemma}
  \label{lm:2-pattern}
  Let $D$ be a connected component in $G_n$ and $\phi$ a leg
  mapping for $\nu=(2,1)$ into the boxes of $D$. Let $S \subseteq [n]$ denote the set of indices belonging to the boxes in $D$. For any $\eps > 0$
  there is a $1$-sided error $\eps$-tester for $\phi$-legged
  $\nu$-freeness in $D$ with query complexity $O((\log |S|)/\eps)$.
  \end{lemma}
  \begin{proof}
  The component $D$ has at most $2$ boxes since $\nu$ is a pattern of length $2$.
  If $D$ is a single box $\bx(S,I)$, then the problem is identical to erasure-resilient monotonicity testing,
  where the points belonging to $D$ are the nonerased points and the points $(x,y)$ with $x \in S$ and $y \notin I$ are erased. 
  In this case, we can use an existing $O((\log |S|)/\eps)$-query erasure-resilient tester~\cite{DixitRTV18}, since $D$ is dense and contains a constant fraction of 
  points in the stripe defined by $S$.

    In the rest, we assume that $D$ is composed of exactly two boxes $D = B_1 \cup
    B_2$, and $\phi(i)= B_i, ~i=1,2$. 
    Consider first the case where the boxes are on the same layer and $B_1$ is on the left of $B_2$. A $\phi$-legged $\nu$-appearance in this case is constituted by $(i,j) \in \St(B_1) \times \St(B_2)$ such that
    $f(i) > f(j)$. The $\eps$-tester is as follows. 
    \begin{enumerate}
    \item Sample $
    O(1/\eps)$ indices independently and uniformly at random from
    each one of the stripes $\St(B_1)$ and $\St(B_2)$. 
    \item Reject if there exists indices $i,j$ in the sample such that $(i,f(i)) \in B_1$, $(j,f(j)) \in B_2$ and $f(i) > f(j)$; accept otherwise.
    \end{enumerate}

    The tester has $1$-sided error and has query complexity
    $O(1/\eps)$. 
    We now show that if $D$ is $\eps$-far from $\phi$-legged $\nu$-freeness, 
    then the tester above rejects with constant probability.
    Let $s = |\St(B_1)|$. It must be the case that $D$ has a 
    matching $M$ of $\phi$-legged $\nu$-appearances of cardinality at least $\eps s$.
    Let $\alpha$ be the median value of the $2$-legs in this
    matching. Namely, there are at least $\eps s/2$ pairs in
    $M$ with  the value of the left leg $> \alpha$. Thus, the probability of a sampled index $x
    \in \St(B_1)$ to be a $2$-leg in $M$ and with $f(x) > \alpha$ is at
    least $\eps/2$.  By the same argument, for half the pairs in
    $M$ their $2$-leg value is below $\alpha$ and, for each such
    pair, its corresponding $1$-leg in $B_2$ has a lower value than
    $\alpha$. Hence with probability at least $\eps /2$, a random query $y
    \in B_2$ will be such that $f(y) < \alpha$.
    We conclude that if these two events occur we find the required
    pair. These two
    events happen with probability at least $1- 2(1-\eps/2)^\ell
    > 2/3$ for an appropriate number of queries $\ell = O(1/\eps)$. This ends the proof
    for this case.

The other case is when $B_1$ and $B_2$ are on the same stripe. A
similar tester with a similar correctness argument is applicable for this case as well.
  \end{proof}

 \paragraph{General Case.}

 Let $\pi \in \mathcal{S}_k$ be fixed and let $\nu \in \mathcal{S}_r$, where $r \leq k$. 
 Assume that we call the algorithm
 $\mathsf{AlgTest}_{\pi}(\nu,\phi,D,m,\eps)$, where $D$ is a single component (in some grid of boxes
 $G_{\ell,\ell}$)   containing 
 the boxes $B_1, B_2, \dots, B_t, ~ t \leq r$.
 Let $S = \bigcup_{j \in [t]} \St(B_j)$ and $I = \bigcup_{j \in [t]} L(B_j)$.

 In what follows, we show that if $D$ is $\eps$-far from
 being $\phi$-legged $\nu$-free, the call to the algorithm $\mathsf{AlgTest}_\pi(\nu,\phi,D,m,\eps)$ finds a
 $\phi$-legged $\nu$-appearance or a $\pi$-appearance w.h.p. 
 This will complete the proof of correctness.
 We assume, for simplicity, that $f$ is one-to-one (see note at the end
 of this section for handling the case when $f$ is not one-to-one).


The first
 step of~\cref{algo:main} is Step~\ref{stp:gridding}, which is a call
 to Gridding$(S,I,m,\beta)$, where $\beta= \eps/(200k\kappa)$. By~\cref{cl:q_layering}, we know that w.h.p.\ this call returns a
 decomposition of $\bx(S,I)$ into an $m' \times m' $ grid of
 subboxes $D_{m',m'}, ~m \leq m' \leq 2m$, where a subset of boxes are
 marked and a subset of these marked boxes are dense w.r.t.\ the threshold
 $\beta$. 
 Additionally, the set of intervals $\mathcal{I} = \{I_j\}_{j \in [m']} $ corresponding to the layers of $D_{m',m'}$ form a nice 
 $m$-partition (see~\cref{def:nice-partition}) of $\bx(S,I)$.  Since $f$ is one-to-one, there are no 
 single-valued layers and hence, for each $j \in [m']$, it holds that $\den(S,I_j) \leq 4/m$.  

 In the next stage (Step~\ref{stp:too-many-boxes} of
 $\mathsf{AlgTest}$), the algorithm checks whether the marked boxes of
 $D_{m',m'}$ directly contain a $\pi$-appearance. Such an appearance 
 corresponds to an actual appearance in $f$ by~\cref{obs:homo}. Hence, we
 either find a $\pi$-appearance and we are done, or we conclude by~\cref{clm:MarcusTardos} that $D_{m',m'}$ has at most $\kappa m'$
 marked boxes.  
 Then, in Step~\ref{stp:sparsification} of $\mathsf{AlgTest}$ we delete all points in each layer and
 each stripe that  
 contains more than $d$ marked boxes. We additionally delete all points in all the non-dense boxes.

\begin{claim}
If $D$ is $\eps$-far from $\phi$-legged $\nu$-free, then the union of dense boxes that remain after Step~\ref{stp:sparsification} in~\cref{algo:main} contains a matching $M'$ of $\phi$-legged $\nu$-appearances of cardinality at least~$\frac{9\eps |S|}{10k}$.
\end{claim}
\begin{proof}
 Since $D_{m',m'}$ contains at most $\kappa m'$
 marked boxes, it follows that at most $\frac{\kappa}{d} = \frac{\eps}{100k}$ fraction of the layers have more than $d$ marked
 boxes. Hence, using the bound on the density of each layer (due to the
 nice $m$-partition of $\bx(S,I)$) from~\cref{cl:q_layering}, deleting the points in these marked boxes deletes at most $\frac{\eps}{100k
   } m' \cdot  \frac{4}{m} \cdot |S| \leq \frac{8 \eps |S|}{100k
   }$ points from $D$.  By a similar argument, the number of points that get deleted 
   by removing stripes with more than $d$ marked boxes is at most $\frac{\eps |S|}{100k
   }$. 
 Moreover, by the third item of~\cref{cl:q_layering}, we know that
 the total number of points that belong to marked boxes that are not tagged dense by
 $\mathsf{AlgTest}$ is at most $\beta \frac{|S|}{m'} \cdot \kappa m' \leq \frac{\eps |S|}{200k}$, where the inequality follows by our setting of $\beta$. 
 Finally, combining the second item in~\cref{cl:q_layering} 
 with the fact that we delete each stripe containing more than
 $d = \frac{100k\kappa}{\eps}$ marked boxes, 
 for each stripe that is left, the marked boxes contain at least $1-1/(\log^2 n)$ fraction of the
 points in it. 
 Hence, the total number of
 points deleted in~Step~\ref{stp:sparsification}
of \cref{algo:main} is at most $ \frac{8 \eps |S|}{100k
   }  + \frac{\eps |S|}{100k
   }  + \frac{\eps |S|}{200k
   } + \frac{|S|}{\log^2 n} \leq \frac{\eps |S|}{10k}$.

   Recall that we assume  that $D$ is $\eps$-far from being
   $\phi$-legged $\nu$-free. This implies that it contains a matching of
 $\phi$-legged $\nu$-appearances of size at least $\eps |S|/k$. For the rest of this
 proof, we fix such a matching $M$. 
 Since each
 deleted point deletes at most $1$ member from $M$, there is a matching $M'$ of cardinality at least $\frac{\eps |S|}{k} -
 \frac{\eps |S|}{10k} \geq \frac{9\eps |S|}{10k}$ with all legs in the set of dense boxes remaining after~Step~\ref{stp:sparsification}.  
\end{proof}

 We can partition $M'$ into a collection of disjoint matchings $M' =  \bigcup_{i \in [r]} M_i$, where $M_i$ contains the $\phi$-legged $\nu$-appearances in $M'$ belonging to configuration copies in $D_{m',m'}$
 that have~$i$ components.  
 Recall that all the legs of every $\phi$-legged $\nu$-appearance in $M'$ belong to the single component $D$ made of the boxes $B_1, \dots , B_t$. 
 However, with respect to the grid $D_{m',m'}$, each such $\nu$-appearance has a corresponding leg mapping that maps the legs of the appearance to boxes in $D_{m',m'}$, which are actually subboxes of $B_1, \dots, B_t$.
 Some of the leg mappings of $\nu$-appearances to subboxes might result in configurations with multiple components in the finer grid $D_{m',m'}$.
 
 It follows that either $M_1$ or $\bigcup_{i \in [r-1]} M_{i+1}$ has cardinality
 at least $\frac{9\eps |S|}{20k}$.
 Let $\eps_1 = 9\eps/(20k)$.
 
 \begin{claim}
 If $|M_1| \geq \eps_1 |S|$, then with high probability,~\cref{algo:main} finds a $\phi$-legged $\nu$-appearance or a $\pi$-appearance in~Step~\ref{stp:single-comp}.
 \end{claim}
 \begin{proof}
 The number of
 $1$-component configuration copies in the grid $D_{m',m'}$ that share a dense box and contain at most $r$ boxes is at most $(r-1)! \cdot \left(2d\right)^{r-1}$.
 Combined with the fact that the total number of dense boxes is at most $dm'$, we can see that the number of distinct copies of $1$-component configurations with at most $r$ boxes
 is at most $dm' \cdot (r-1)! \cdot \left(2d\right)^{r-1}$. 
 
 Therefore, in expectation, a uniformly random copy of a $1$-component configuration with at most $r$ boxes contains at least $\frac{\eps_1|S|}{dm'\cdot (r-1)!(2d)^{r-1}}$ many $\nu$-appearances from $M_1$. 
 These $\nu$-appearances each could have different leg mappings that are each $(\phi,D)$-consistent (see~\cref{def:phi-consistence}).
 There are at most $r^r$ ways to map the $r$ legs of $\nu$ into at most $r$ boxes.
 Thus, in expectation, a uniformly random $1$-component copy $C$ and a uniformly random $(\phi,D)$-consistent mapping of $r$ legs into the boxes of $C$ correspond to at least $\frac{\eps_1|S|}{dm' \cdot (r-1)!(2d)^{r-1} \cdot r^{r}}$ many $\nu$-appearances from $M_1$.

 By the reverse Markov's inequality\footnote{Let $X$ be a random
   variable such that $\Pr[X \leq a] = 1$ for some constant $a$. Then,
   for $d < E[X]$, we have $\Pr[X > d] \geq \frac{\E[X] -d}{a - d}$.},
 with probability at least
 $\frac{\eps_1}{(2d)^{r}\cdot (r-1)! r^{r}}$, the number of
 $\phi'$-legged $\nu$-appearances in a uniformly random one-component
 configuration copy for a uniformly random $(\phi,D)$-consistent leg
 mapping $\phi'$ is at least
 $\frac{\eps_1|S|}{m'(2d)^{r}\cdot (r-1)! r^{r}}$.  
 \ignore{In the
   reverse Markov application, we have used $a$ to be equal to
   $|S|/m'$, which is the upper bound on the total number of
   appearances in any connected
   component.}
 Therefore, w.h.p., at least one of the $\frac{\log^3 n}{\eps^{r+1}}$
 sampled one-component configuration $C$ and an associated
 $(\phi,D)$-consistent leg-mapping $\phi'$ contains at least
 $\frac{\eps_1|S|}{m'(2d)^{r}\cdot (r-1)! r^{r}}$ many $\phi'$-legged
 $\nu$-appearances. Conditioned on this event, the sub-grid restricted to
 this component is at least
 $\frac{\eps_1}{(2d)^{r}\cdot (r-1)!r^{r}}$-far from being
 free of $\phi'$-legged $\nu$ appearances. By
 the induction hypothesis, the recursive call in~\cref{algo:main} in~Step~\ref{stp:single-comp} with parameter $\eps'' = \frac{\eps_1}{(2d)^{r}\cdot (r-1)!r^{r}}$ will detect, with high probability, one such $\phi'$-legged
 $\nu$-appearance, which is also a $\phi$-legged $\nu$-appearance in $D$.
\end{proof}

Next, we consider the case that $|\bigcup_{i \in [r-1]} M_{i+1}| \geq \eps_1 \cdot |S|$. We start by presenting the following definitions and claims.
 Let $\ell$ be a positive integer. Let $H \subseteq
   [\ell]^t$ be a collection of weighted ordered $t$-tuples.  
   Each $t$-tuple $\mathbf{a} \in H$ is associated with a positive weight $w(\mathbf{a})$.
   We use $w_H$ to denote $\sum_{\mathbf{a} \in H} w(\mathbf{a})$, i.e., the sum of weights of all elements in $H$.
   Let $\mathbf{a}_i$ denote the $i$-th coordinate in $\mathbf{a}$.
   For $x \in [\ell]$ and $i \in [t]$, let $w_i(x) = \sum\limits_{\mathbf{a} \in H: \mathbf{a}_i = x} w(\mathbf{a})$. 
   Namely, $w_i(x)$ is the sum of weights of elements in $H$ that have the
   value $x$ in their $i$-th coordinate. 
   \begin{definition}\label{def:heavy}
   For $H$ with $w_H = \ell p$, we say that $x\in
   [\ell]$ is $(i,\alpha)$-heavy if $w_i(x) \geq \alpha p$. 
   \end{definition}
 \begin{claim}
   \label{clm:hypergraph}
   Let $H \subseteq [\ell]^t$ be such that $w_H = p
   \ell$. 
   Then for every $\alpha \leq \frac{1}{t}$, there exists $\mathbf{a} \in H$
   such that for every $i \in [t]$, $\mathbf{a}_i$ is $(i,\alpha)$-heavy.
 \end{claim}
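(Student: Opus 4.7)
The plan is to apply the probabilistic method via a weighted union bound. Sample a random tuple $\mathbf{a}$ from $H$ according to the distribution $\Pr[\mathbf{a} = \mathbf{b}] = w(\mathbf{b})/w_H$ for $\mathbf{b} \in H$. For each coordinate $i \in [k]$, let $E_i$ denote the event that the sampled $\mathbf{a}_i$ is \emph{not} $(i, 1/a^2)$-heavy, i.e., $w_i(\mathbf{a}_i) < p/a^2$. The goal is to show that $\Pr\bigl[\bigcup_{i \in [k]} E_i\bigr] < 1$, which immediately yields the existence of a tuple in $H$ all of whose coordinates are $(i, 1/a^2)$-heavy.

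First I would bound $\Pr[E_i]$ for a fixed coordinate $i$. By the definition of the weighted marginals, $\Pr[E_i] = \sum_{x \in [\ell]: w_i(x) < p/a^2} w_i(x)/w_H$. Since each summand is strictly less than $p/(a^2 w_H)$ and there are at most $\ell$ values of $x$, this sum is strictly bounded by $\ell p/(a^2 w_H)$. The hypothesis $w_H \geq p\ell$ then yields $\Pr[E_i] < 1/a^2$. (In the degenerate case where the set of light values for coordinate $i$ is empty, $\Pr[E_i] = 0 < 1/a^2$, so the strict bound still holds trivially.)

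Next, I would take a union bound over the $k$ coordinates and invoke the assumption $a^2 \geq k$:
\[
\Pr\Bigl[\bigcup_{i \in [k]} E_i\Bigr] \;\leq\; \sum_{i=1}^{k} \Pr[E_i] \;<\; \frac{k}{a^2} \;\leq\; 1.
\]
Since the probability that $\mathbf{a}$ fails to have every coordinate heavy is strictly less than $1$, with positive probability all the $E_i$ fail simultaneously, guaranteeing the existence of a tuple $\mathbf{a} \in H$ as required. The only point that requires mild care is making sure the per-coordinate inequality is genuinely strict so that the overall bound stays strictly below $1$ even when $a = \sqrt{k}$; no deeper machinery is needed.
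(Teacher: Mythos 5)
Your proof is correct, and it is essentially the paper's own argument recast in probabilistic language: sampling a tuple with probability proportional to its weight and union-bounding the bad events $E_i$ is exactly the paper's computation that the total weight of tuples with some non-heavy coordinate is strictly less than $\frac{kp\ell}{a^2} \leq p\ell \leq w_H$. Your extra care about strictness (including the empty-light-set degenerate case) is sound and matches what the paper implicitly relies on.
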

 \begin{proof}
 For any fixed $i \in [t]$, $\sum_{x \in [\ell]} w_i(x) = w_H = p \ell$. On the
 other hand, the sum of weight of values $x \in [\ell],$ each that is not $(i, \alpha)$-heavy
 is less than $\ell \cdot \alpha p$ by definition.
 Therefore, the set of all $\mathbf{a}
 \in H$ in which for some $i \in [t]$, $\mathbf{a}_i$ is not $(i,\alpha)$-heavy
 has a total weight  less than $t\ell\cdot \alpha p 
 \leq p\ell$, since  $\alpha \leq \frac{1}{t}$.
 Hence, removing all such tuples from $H$ leaves at least one tuple $\mathbf{a}'
 \in H$. By definition,  for every $i \in [t],$ $\mathbf{a}'$ is $(i,\alpha)$-heavy. 
 \end{proof}

  \begin{claim}
  If $|\bigcup_{i \in [r-1]} M_{i+1}| \geq \eps_1 \cdot |S|$, then with high probability,~\cref{algo:main} finds a $\phi$-legged $\nu$-appearance or a $\pi$-appearance in~Step~\ref{stp:multi-comp}.
  \end{claim}
  \begin{proof}
  Consider $2 \leq h \leq r$ such that $|M_h| \geq \eps_1 |S|/r$.
  There are at most $c = r^{3r}$ configurations with $\leq r$ boxes along with their associated mappings of $r$ legs into those boxes. 
  Consider a $(\phi,\nu,D)$-consistent (see~\cref{def:phi-consistence}) $h$-component configuration $\mathcal C$ 
  along with corresponding leg mappings $\{\phi_i\}_{i \in [h]}$ and sub-patterns $\{\nu_i\}_{i \in [h]}$ such that there are at least $|M_h|/c$ appearances in $M_h$
  that form the configuration $\mathcal{C}$.
  Let $V$ be the set of
 all $1$-component configuration copies made up of at most $r$ dense boxes in $D_{m',m'}$. 
 Let $H\subseteq V^h$ be the (hyper)graph
  over the vertex set $V$, where $(C_1,C_2 \dots C_h) \in H$ if it is a copy of $\cC$. 
 Its weight, denoted $w(C_1,C_2 \dots C_h)$, is the number of  
 $\phi$-legged $\nu$-appearances in $M_h$ such that each one of them decomposes, for $i \in [h]$, into $\phi_i$-legged $\nu_i$-appearances in $C_i$.
 It must be the case that $w_H \geq \eps_1 |S|/(cr)$, as each member in~$M_h$ forming the configuration $\mathcal C$ contributes to $w_H$.
 
 This corresponds to the
 setting of~\cref{clm:hypergraph} with $t = h$ and $\ell = |V|$. In
 addition, since $\ell = |V| \leq dm' \cdot (2d)^{r-1} \cdot (r-1)!,$ it follows
 that
 $w_H = \ell \cdot \frac{w_H}{\ell} \geq \ell \cdot \frac{\eps_1}{c 2^{r-1} d^r r!} \cdot
 \frac{|S|}{m}$ which corresponds to $p = \frac{\eps_1}{c 2^{r-1} d^r r!} \cdot
 \frac{|S|}{m}$.  Finally,
 for $\tilde{C} \in V$ and $i \in [h]$, the quantity $w_i(\tilde{C})$ is the number of
 $\phi$-legged $\nu$-appearances in $M_h$ forming the configuration $\mathcal{C}$, where the legs of the $\nu_i$ subpattern are $\phi_i$-mapped to
 $\tilde{C}$. 
 
 Let $\alpha = \frac{1}{r} \leq \frac{1}{h} = \frac{1}{t}$ for the application of~\cref{clm:hypergraph}.
 As a result,~\cref{clm:hypergraph} guarantees that there is
 an $h$-tuple $(C_1,C_2, \dots C_h) \in V^h$ consistent
 with a copy of $\mathcal C$, contains a $\phi$-legged $\nu$-appearance consistent with $\mathcal{C}$, and for which, each of $C_1, C_2 \dots C_h$ are
 $\alpha$-heavy (see~\cref{def:heavy}).  
 In turn, this means that for each $i \in [h]$, the component $C_i$
 is 
 $\eps'$-far from being $\phi_i$-legged $\nu_i$-free, where $\eps'
 = \frac{p\alpha}{r|S|/m} = \frac{\eps_1}{cr^2 \cdot r! \cdot 2^{r-1} \cdot d^r}$.
 This additionally implies that the overall density of the marked boxes
 involved in each $C_i, i \in [h]$ is also at least $\eps'$.    
 This, in turn, implies that each test for
 $\phi_i$-legged $\nu_i$-freeness in $C_i$ for $i \in [h]$ (with the corresponding
 distance parameter $\eps'$), that is done in Step~\ref{stp:multi-comp}, is going to
 succeed with very high probability by the induction hypothesis. 
\end{proof}

This concludes the proof of correctness for all cases, assuming
inductively correctness for $r-1$ and $a-1$. Since at each recursive
call we decrease $a$ by $1$, the recursion
depth is at most $a$. In particular, as the distance
parameter $\eps$ decreases by a constant factor (assuming $k$ and $a$ are constants), namely, that is
independent of $n$ or $|S|$, we conclude that it remains constant at
all recursion levels.

\paragraph{Query Complexity.} We now analyze the query complexity of a call to $\mathsf{AlgTest}_\pi(\nu,\phi,D,m,\eps)$. 
We fix $m = m(|S|)$ and write the
 complexity in terms of $m$, where $S$ is the set of indices of $D$ as defined in the algorithm.
 For this fixed $m$ (that is not going to be
 changed during recursive calls), let 
 $a$ be the
 smallest integer for which $m^{a} \geq k^{a-1} |S|$.  

\begin{lemma}
\cref{algo:main} makes $\tilde{O}\left(a! \cdot m^r \cdot \left(\frac{k}{\eps}\right)^{\Theta(k^a)}\right)$ queries.
\end{lemma}
\begin{proof}
  
 We express the
 query complexity as $q(m,a,r,\eps)$, where the integer $a$ here denotes the recursion depth.  
Recall that $S, I$ are as defined in the algorithm (based on the component~$D$ contains).
One of the boundary cases is when $\nu$ has length $2$ and we have $q(m,a',2,\eps) = O\left(\frac{\log |S|}{\eps}\right) = \tilde{O}(1/\eps)$ from~\cref{lm:2-pattern} for $a' \leq a, \eps \in (0,1)$.
 The other boundary case is for $q(m,1,r',\eps)$ for $\eps \in (0,1), r' \leq r$, i.e., the query complexity after $a-1$ recursive calls. Since the size of the domain on which the algorithm is called decreases by a factor of at most $k/m$ with each recursive call, the final domain size is at most $\left(\frac{k}{m}\right)^{a - 1} \cdot |S| \leq m$ and the algorithm can query every index in the domain at that level. Therefore $q(m,1,r',\eps) = m$ for all $\eps \in (0,1), r' \leq r$. 
 

We now write the recurrence for the general case by ignoring polylog factors.
 We have $$q(m,a,r,\eps) = m\cdot \frac{k^2}{\eps^2} + m\cdot \left( \frac{k}{\eps}\right)^{\Theta(k)} q(m,a-1,r-1, \eps^{(1)}) +  \left( \frac{1}{\eps}\right)^{\Theta(k)} q(m,a-1,r,\eps^{(1)}),$$ where $\eps^{(1)} = \left( \frac{\eps}{k}\right)^{\Theta(k)}$. The
 first term on the right comes from the gridding of $D$. The second term comes from
 recursively calling the algorithm for each one of the possible $m$ components in
 $D$ for the constantly many smaller patterns and constantly many consistent leg-mappings (Step~\ref{stp:multi-comp}), and the third item from the self
 recursion in Step~\ref{stp:single-comp}.  
\ignore{ 
 \nvedit{The first term in the recurrence is dominated by the second term and hence we get the following simplified recurrence.
 \begin{align*}
 q(m,a,r,\eps) = m\cdot \left( \frac{k}{\eps}\right)^{\Theta(k)} q(m,a-1,r-1, \eps^{(1)}) +  \left( \frac{1}{\eps}\right)^{\Theta(k)} q(m,a-1,r,\eps^{(1)})
 \end{align*}

 For integer $1 \leq a' \leq a$, let $\eps^{(a')}$ denote the value $\left( \frac{\eps^{(a'-1)}}{k}\right)^{\Theta(k)} = \left( \frac{\eps}{k}\right)^{\Theta(k^{a'})}$, where $\eps^{(0)}$ is defined to be $\eps$.
 For substitution, we assume that $q(m,a,r,\eps) \leq a! m^r \left(\frac{k}{\eps}\right)^{\Theta(k^a)}$. 
 
 \begin{align*}
 q(m,a,r,\eps) = m\cdot \left( \frac{k}{\eps}\right)^{\Theta(k)} q(m,a-1,r-1, \eps^{(1)}) +  \left( \frac{1}{\eps}\right)^{\Theta(k)} q(m,a-1,r,\eps^{(1)})
 \end{align*}
 
 In the above, we first keep expanding only the second terms repeatedly, until all terms have $r - 1$ in their arguments. There will be at most $a$ terms and we will take a really worst case sum and get $am \left(\frac{k}{\eps}\right)^{\Theta(k^{a})} \cdot q(m,a-1,r-1,\eps^{(a-1)})$. By substituting our assumed upper bound, we will get what we want to prove.
 }}

 We can solve the recurrence above to get the solution $q(m,a,r,\eps) = a! \cdot m^r \cdot \left(\frac{k}{\eps}\right)^{\Theta(k^a)}$.
 As long as $a \leq \log \log \log n$, the distance parameter at the lowest recursion levels is $\Omega(1/\log n)$, which is allows us to call
 the $2$-pattern testers (one of the base cases) in $\tilde{O}(1)$ query complexity.  
  \end{proof}




 \paragraph{A note on single-valued layers:}  
 In case the function is not one-to-one, the subroutine Layering may discover single-valued layers if
 there are such values of large enough density. The first place where
 this might have an impact is at Step~\ref{stp:sparsification} of $\mathsf{AlgTest}$, where we delete
 all layers that contain more than a constant number of marked boxes. The modification is that we
 apply this step only on multi-valued layers. Hence, single-valued
 layers will possibly remain with more than $d$ marked
 boxes. However,~\cref{cl:q_layering} remains unchanged. 

 We used the fact that every layer has at most $d$ marked boxes
 to conclude that every marked box that may be involved in a
 $\nu$-appearance is contained in at most $(r-1)!(2d)^r$
 components of at most $r$ boxes. This claim is still true since a
 $\nu$-appearance cannot have two or more legs in a single
 valued layer. Hence,
 each connected component that contains a $\nu$-appearance, can contain at most one box from each single-valued layer.
  It follows that  all our arguments go through even in this setting. The only
 difference is that we also consider components containing at most one marked box
 per single-valued layer.

\paragraph{Acknowledgements.} We thank anonymous reviewers for their extensive comments that significantly improved the presentation of the manuscript.


\newpage

\printbibliography
\end{document}